\def\newblock{\ }%
\newtheorem{theorem}{Theorem}
\newtheorem{corollary}{Corollary}
\newtheorem{lemma}{Lemma}
\newtheorem{proposition}{Proposition}
\newtheorem{remark}{Remark}
\newtheorem{example}{Example}
\renewcommand{\algorithmicrequire}{\textbf{Input:}}  
\title{Cascading Losses in Reinsurance Networks
\footnote{We thank Steffen Schuldenzucker for his valuable contribution in the proof to Thereom~\ref{prop:net_liabilities_equal} and Sven Seuken and the Economics \& Computation group at University of Zurich for helpful discussions. We also thank Frank Krieter, Dominic Rau, and the risk team at Swiss Re and Sean Bourgeois at Tremor for valuable discussions covering the reinsurance industry. The first author acknowledges an Amherst College Fellowship. This work was funded through NSF RTG Award  \#1645643, NSF CRISP Award \#1638230 and  NSF CAREER Award \#1653354.
	}
}
\author{Ariah Klages-Mundt\thanks{Cornell University, Center for Applied Mathematics, Ithaca, NY 14850, USA, email: {\tt aak228@cornell.edu}.}   \ \ \ \ \ \ \ \
Andreea Minca\thanks{Cornell University, School of Operations Research and Information Engineering, Ithaca, NY, 14850, USA, email: {\tt acm299@cornell.edu}.}
}
\date{May 1, 2019}
\begin{document}

\maketitle

\begin{abstract}
\noindent We develop a model for contagion in reinsurance networks by which primary insurers' losses are spread through the network. Our model handles general reinsurance contracts, such as typical excess of loss contracts. We show that simpler models existing in the literature--namely proportional reinsurance--greatly underestimate contagion risk. We characterize the fixed points of our model and develop efficient algorithms to compute contagion with guarantees on convergence and speed under conditions on network structure. We characterize exotic cases of problematic graph structure and nonlinearities, which cause network effects to dominate the overall payments in the system. We lastly apply our model to data on real world reinsurance networks. Our simulations demonstrate the following:
\begin{itemize}
\item Reinsurance networks face extreme sensitivity to parameters. A firm can be wildly uncertain about its losses even under small network uncertainty.
\item Our sensitivity results reveal a new incentive for firms to cooperate to prevent fraud, as even small cases of fraud can have outsized effect on the losses across the network.
\item Nonlinearities from excess of loss contracts obfuscate risks and can cause excess costs in a real world system.
\end{itemize}
\end{abstract}

\section{Introduction}

The London market excess of loss (LMX) spirals of the 1980-90s revealed how global interconnections among reinsurers (i.e., insurers who insure other insurers) can cause contagion in the reinsurance market \citep{bain99}. There was high concentration of losses despite the belief that all parties were properly insured. A series of major storms caused tail losses to the London insurance market (Lloyd's in particular). While risks in the London market were reinsured outside the UK, retrocession (i.e., reinsurance on reinsurance) brought these losses back to the London market, resulting in unexpected concentration of losses. Figure~\ref{fig:LMX_diagram} visualizes these interconnections.

\begin{figure}
	\centering \includegraphics[width=11cm]{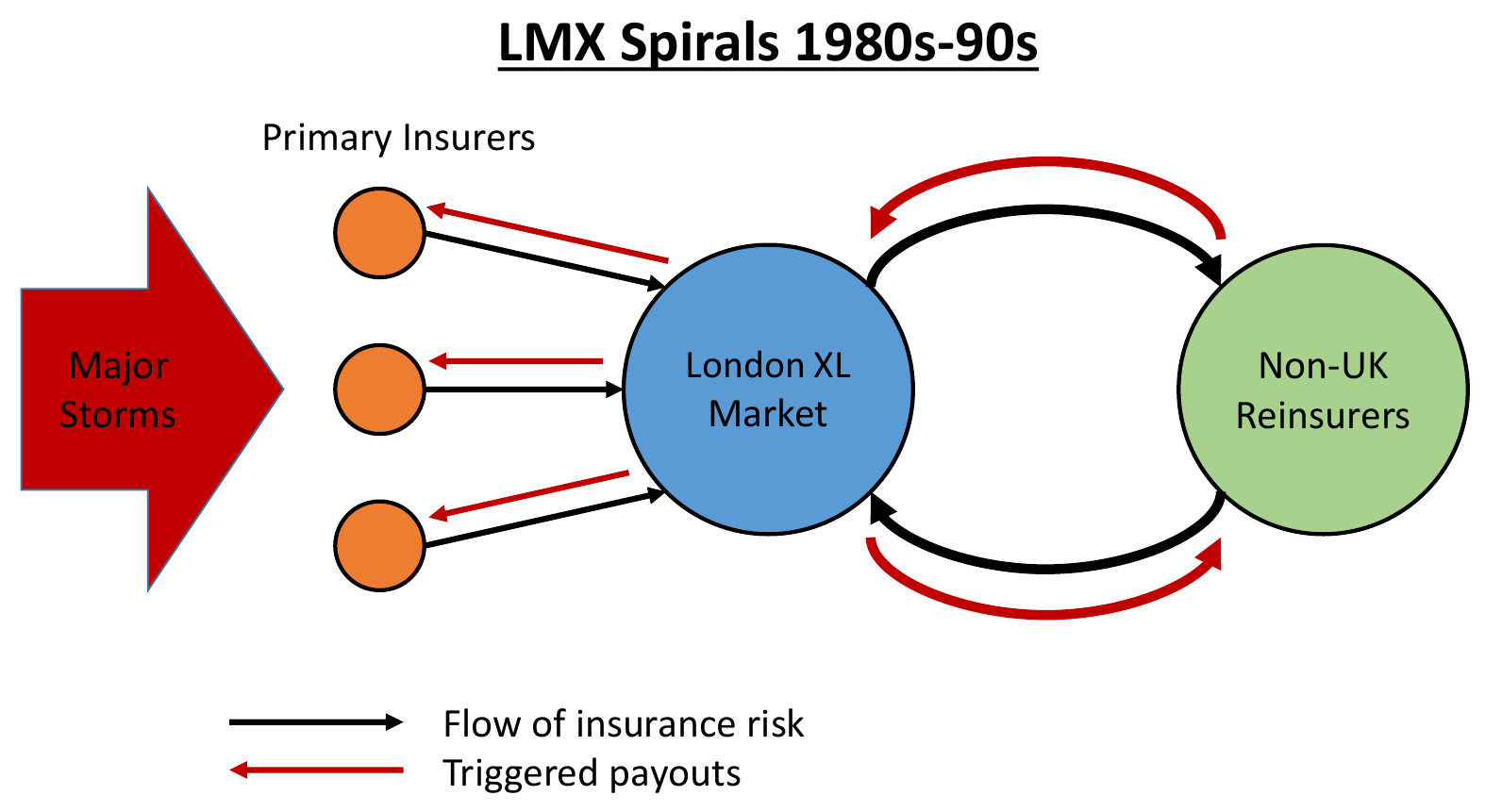}
	\caption{Diagram of LMX reinsurance spirals.}\label{fig:LMX_diagram}
\end{figure}

After these events, the industry mitigated spiral risks by reducing the size of the retrocession market. Today, there is a sense in insurance that the risk of spirals is largely a thing of the past and that risks are properly shared with reinsurers. To our knowledge, no reinsurance risk models used in industry directly account for these network effects. By applying the machinery we develop in this paper to estimates of the current US reinsurance system, we show that the reinsurance market is, in fact, not safe to network effects. We show such network effects can dominate the tail behavior of the system in ways that are difficult to predict. The US has insurance guaranty mechanisms that protect policyholders in case of insurance company insolvency. Our results are even more relevant in this case because the spiraling losses would be borne by the state.

We propose a model for contagion in reinsurance markets by which primary insurers' losses are spread throughout the network. Despite a vast literature on contagion in financial networks (see e.g., \citep{eisenberg01}, \citep{acemoglu15}, \citep{jackson14}), no existing contagion models are general enough to cover reinsurance contracts. The majority of financial network models are limited to simple contexts in which network interactions are representable by debt or equity contracts between entities. Very little work has extended these models to more complicated derivatives whose payoffs in equilibrium depend on the liabilities and counterparty risk across the network. As a notable exception, \citep{seuken16} and \citep{seuken17} demonstrate difficulties in clearing networks with credit default swaps in addition to initial debt contracts. Reinsurance contracts differ from debt contracts in that we do not outright know their liabilities. Reinsurance contracts differ from credit default swaps in that contract liabilities are not related to default events. Instead, the liabilities of reinsurance contracts are interrelated and nonlinear, which can lead to difficulties in determining equilibrium payoffs and to multiple solutions. Our model ventures far beyond the settings and results of \citep{eisenberg01} and \citep{acemoglu15} as it, in general, requires working with matrices with arbitrarily large column sums as opposed to column sums $\leq 1$.

\citep{blanchet15} developed one of the first network models for reinsurance contagion; however, they assume that reinsurance contracts are proportional contracts as opposed to the more common excess of loss contracts (we provide more background on these types of contracts in the next section) and that reinsurance contracts do not cover liabilities from other reinsurance contracts, which limits the propagation of losses to two steps in the network. These assumptions remove exotic behavior from the system, such as reinsurance spirals, which we show can play a critical role. Under these assumptions, they provide large deviation results for the loss in the system. In contrast, we focus on a more general setting that handles a wide variety of reinsurance contracts that exist in the real world, including the more common excess of loss contracts. Our simulations comparing excess of loss networks with proportional networks further show that this assumption in \citep{blanchet15} dangerously underestimates contagion risk in real reinsurance networks.

\citep{feinstein18} develops a dynamic framework for contingent claims that can accommodate some reinsurance contracts. However, the reinsurance contracts in their model cannot have caps. \citep{kley16} develop a bipartite graph model of tail risk in insurance. However, their model does not include reinsurance. 

\citep{feinstein17} describes the sensitivity of payment equilibria in \citep{eisenberg01} to small variations in the interbank liabilities. In contrast, our focus is on the reinsurance model that produces these liabilities. Further, we show that these liabilities can have wild variations from small uncertainties in network parameters.

In addition to developing a contagion model for reinsurance networks, our contributions include the following:
\begin{itemize}
\item We establish efficient algorithms to compute contagion with guarantees on convergence and speed under conditions on network structure.

\item We characterize exotic cases of problematic graph structure and nonlinearities, which cause network effects that dominate the overall payments in the system. We relate reinsurance spirals to structural properties of the network, such as the existence of graph cycles that recirculate large proportions of reinsurance losses. Further, we show that these cycles can be very complicated interactions of simple graph cycles.

\item We apply our model to real world reinsurance networks using data provided by the National Association of Insurance Commissioners (NAIC). Our simulations show that, using real world data, nonlinearities in contagion can cause extreme uncertainties. We demonstrate that even if a firm has unreasonably precise information\footnote{In the extreme, some real contracts are ambiguous to the degree that the parties to the contract themselves do not even know the contract parameters. We will discuss this further later in the paper.} (i.e., with small uncertainty) about the global structure of the system, it can still be wildly uncertain about the losses it will face from a given shock. We further demonstrate that these nonlinearities can cause excess costs in a real world system--i.e., the insurance-reinsurance system could be structured differently to perform its function to protect real world infrastructure more efficiently.
\end{itemize}
We conclude by introducing three promising starting points for solving real world issues that our results reveal: using distributed systems to control fraud, using network features to predict risk exposure, and designing markets to lower systemic costs.


\section{Reinsurance Contagion Model}

\subsection{Primer on reinsurance contracts}

Reinsurance contracts are insurance contracts that insurance companies take out to protect against large losses on their insurance portfolios. In \textbf{primary reinsurance}, the insurance company protected by the reinsurance is a primary insurance company. In \textbf{retrocession reinsurance}, the insurance company protected by the reinsurance is another reinsurance company. These reinsurance contracts are typically partly collateralized, meaning that, in the event that the reinsurer defaults on their obligations, the reinsured firm still has recourse to the collateral. Most reinsurance contracts in property and casualty are treaty contracts, which insure against losses from the reinsured company's entire insurance portfolio. Alternatively, in a niche case that we will not consider, some contracts have more specialized coverage of facultative risks.

The most common form of treaty reinsurance is an \textbf{excess of loss (XL)} contract, in which the reinsurer covers losses on the reinsured above a deductible (or attachment point). These contracts also commonly have caps (or limits) on the payouts of the contract. The total coverage of a firm is typically split into multiple deductible-cap layers in a tranche structure. Multiple reinsurers typically split each layer, taking fractions of the coverage. Together, the layers form a tower.

Another treaty contract is \textbf{proportional} reinsurance. These have no deductibles or caps, and the reinsurer takes on a percentage of the liabilities of the reinsured according to a coinsurance rate.

\subsection{Two contagion mechanisms}

Reinsurance contracts between a set of insurance companies form a network. Exogenous liabilities to a subset of primary insurers constitutes a shock to this network. This shock may activate the reinsurance to the primary insurers, which can in turn activate a cascade of retrocession reinsurance. Figure~\ref{fig:diagram_liability_propagation} outlines this \textbf{liability propagation} mechanism. The equilibrium of this process gives a network of liabilities between firms. Given their available capital, some firms may be unable to pay these liabilities. These firms default, potentially with extra default costs representing the legal, transactional, and liquidity costs of default. Each default negatively affects the capital of neighboring firms as these firms receive less on the liabilities they are owed. This can trigger a secondary cascade of defaults. Figure~\ref{fig:diagram_default_propagation} outlines this \textbf{default propagation} mechanism. An equilibrium of this second process is a clearing payment vector to the liability network.

\begin{figure}
	\centering
	\begin{subfigure}[b]{0.48\textwidth}
	 \includegraphics[width=\textwidth]{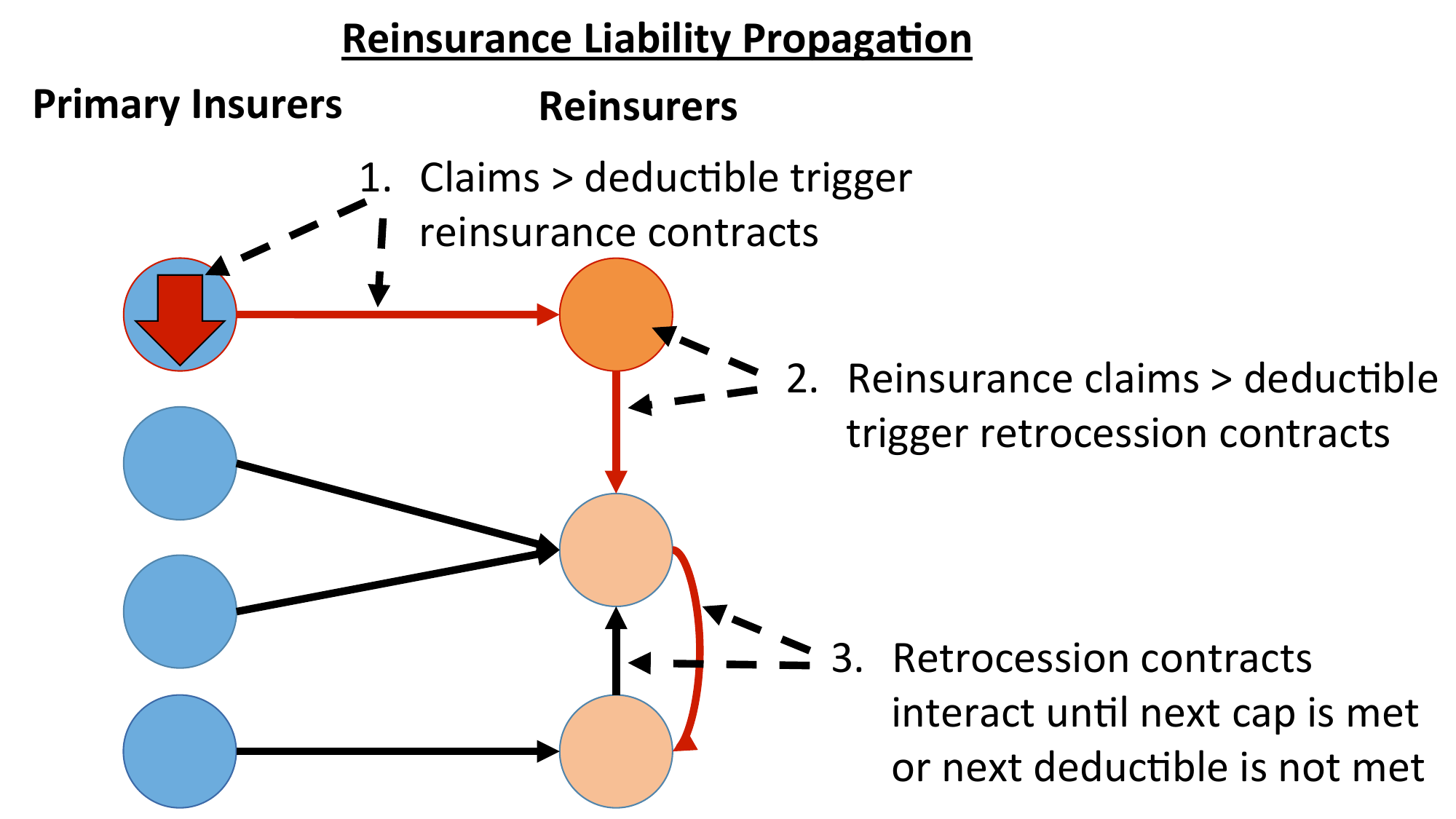}
	\caption{Liability propagation in a reinsurance network.}\label{fig:diagram_liability_propagation}
	\end{subfigure}
	\begin{subfigure}[b]{0.48\textwidth}
	 \includegraphics[width=\textwidth]{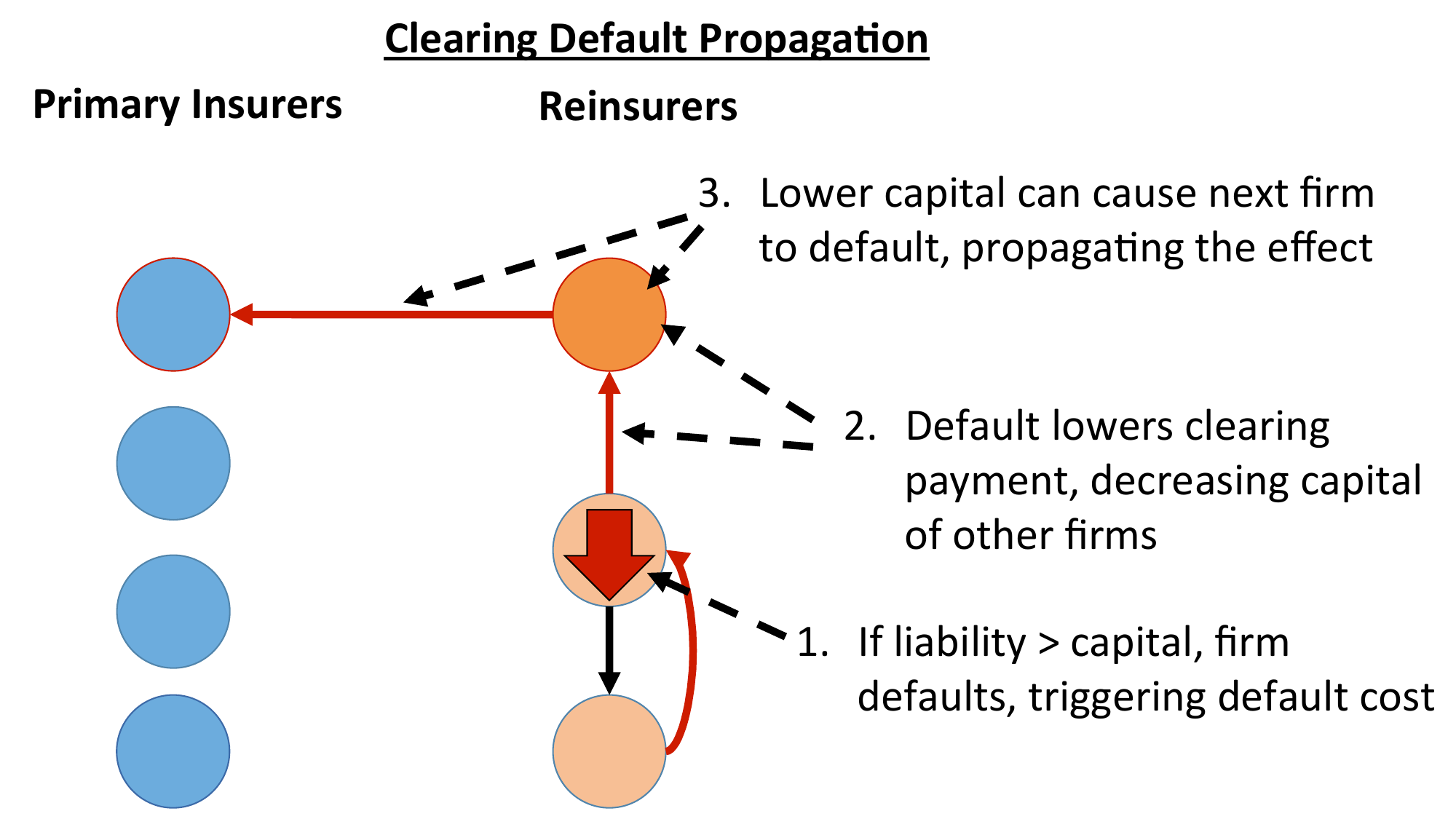}
	\caption{Default propagation in a liability network.}\label{fig:diagram_default_propagation}
	\end{subfigure}
	\caption{Propagation mechanism diagrams.}
\end{figure}

Given a shock, we aim to determine the equilibrium reinsurance payments from a complex interconnection of contracts. Unlike the case of a debt network in \citep{eisenberg01}, we do not outright know the liabilities of each contract, so we cannot directly calculate a clearing payment vector. In a reinsurance network, the liabilities are interrelated and nonlinear. The difficult problem in this case is to determine the equilibrium liabilities given a shock, after which we can solve for a clearing payment vector as in \citep{eisenberg01}. The process for calculating contagion is then as follows:
\begin{enumerate}
\item Given a primary insurance shock, calculate the equilibrium reinsurance liabilities.

\item Apply the available collateral from reinsurance contracts to fulfill or partially fulfill liabilities.

\item Given the remaining capital of firms, clear the remaining liabilities in the network.
\end{enumerate}
We proceed in this paper by developing the machinery to handle the missing piece of the puzzle: the first step. This problem is much more general than related problems formulated in \citep{eisenberg01} and \citep{acemoglu15} and involve matrices with column sums $>1$.

\subsection{Network definitions}

We define the \textbf{reinsurance network} as follows:
\begin{itemize}
	\item $n$ nodes of primary insurance and reinsurance firms
	\item $m$ edges represent reinsurance contracts between firms, directed from reinsurer to reinsured firm. Edges are described by the following weight matrices
	\item $\Gamma$ $n\times n$ matrix of coinsurance rates on contracts (0 if no contract between parties)
	\item $DD$ $n\times n$ matrix of deductibles (also called `attachment points') on reinsurance contracts (0 if no contract between parties)
	\item $CP$ $n\times n$ matrix of reinsurance caps (also called `limits') on contracts (0 if no contract between parties). This is the maximum payout of the contract
	\item $sh$ vector representing shocks to primary insurers.
	\item $e_0$ vector representing initial capital (also called `equity') values of each firm available to payout liabilities
\end{itemize}

We assume the graph is connected, as we can otherwise handle the components separately. We also assume that firms can only reinsure up to $100\%$: i.e., the column sums of $\Gamma$ corresponding to a particular layer of reinsurance sum to $\leq 1$. This is a reasonable assumption as otherwise the contract ceases to serve as insurance and the insured company stands to profit from taking on large losses to their portfolio. This assumption is a standard requirement in insurance contracts.

We will work with the line graph of the network, i.e., the graph that represents edges of the original graph as nodes in the new network and has directed edges when the head of an edge in the original network intersects the tail of another edge in the original network. We define the \textbf{line graph network} as follows:
\begin{itemize}
	\item $m$ nodes representing contracts (i.e., edges) in the reinsurance network
	\item $X$ $m\times m$ adjacency matrix of the line graph, 1-0 weighted
	\item $\ell$ liability vector on contracts
	\item $d$ deductibles vector on contracts
	\item $c$ caps vector on contracts
	\item $s$ shock vector on contracts
	\item $\gamma$ $m\times m$ diagonal matrix of reinsurance rates on contracts
\end{itemize}
The following example describes the transformation to the line graph network.

\begin{example}
	\begin{figure}
		\centering
		\begin{subfigure}[b]{0.35\textwidth}
			\includegraphics[width=\textwidth]{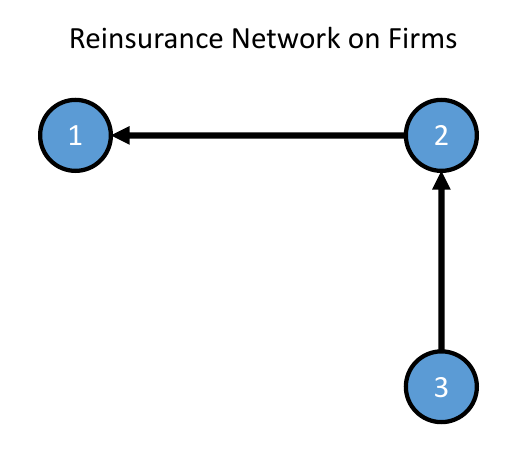}
			\caption{Example reinsurance network.}\label{fig:ex_line_graph_a}
		\end{subfigure}
		\begin{subfigure}[b]{0.35\textwidth}
			\includegraphics[width=\textwidth]{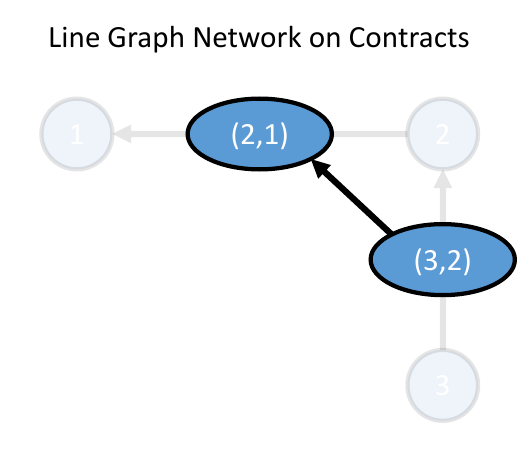}
			\caption{Example line graph network.}\label{fig:ex_line_graph_b}
		\end{subfigure}
		\caption{Example line graph network transformation.}
	\end{figure}
	
	Consider the reinsurance network in Figure~\ref{fig:ex_line_graph_a}. Figure~\ref{fig:ex_line_graph_b} shows the resulting line graph structure. In the original reinsurance network, suppose we have
	$$\Gamma = \begin{bmatrix} 0 & 0 & 0 \\ 0.5 & 0 & 0 \\ 0 & 0.5 & 0 \end{bmatrix}, DD = \begin{bmatrix} 0 & 0 & 0 \\ 10 & 0 & 0 \\ 0 & 10 & 0 \end{bmatrix}, CP = \begin{bmatrix} 0 & 0 & 0 \\ 100 & 0 & 0 \\ 0 & 100 & 0 \end{bmatrix}, sh = \begin{bmatrix} 20 \\ 0 \\ 0 \end{bmatrix}.$$
	Then the line graph network becomes
	$$X = \begin{bmatrix} 0 & 0 \\ 1 & 0 \end{bmatrix}, \gamma = \begin{bmatrix} 0.5 & 0 \\ 0 & 0.5 \end{bmatrix}, d = \begin{bmatrix} 10 \\ 10 \end{bmatrix}, c = \begin{bmatrix} 100 \\ 100 \end{bmatrix}, s = \begin{bmatrix} 20 \\ 0 \end{bmatrix}.$$
\end{example}


The line graph network serves to consider the system as a network of contracts instead of a network of firms. We define a \textbf{financial system} in terms of its line graph network $(X,\gamma, d, c, s)$ (sometimes omitting the $c$ if we are in the domain of infinite caps) as that is the machinery we will need in our theorems and algorithms; however, it can equivalently be defined in terms of the adjacency graphs of the reinsurance network $(\Gamma, DD, CP, sh)$. Note that since $\gamma X$ is nonnegative, the Perron-Frobenius theorem gives us that the spectral radius $\rho(\gamma X) = \lambda_{max}(\gamma X)$. We will show in the next section how to calculate the resulting equilibrium liabilities matrix $L$ (or equivalently liabilities vector $\ell$ in the line graph network) giving liability weights on contracts in a financial system.

\section{Network Liabilities}

\subsection{Liabilities without contract caps}
In the case that each contract has a deductible but no cap (equivalently, each contract has an infinite cap, and so there is no layering of reinsurance), liabilities on contracts equal the sum of direct shocks - deductibles + cross-effects from the network, multiplied by $\gamma$ and with a floor at zero. I.e., the equilibrium liabilities $\ell$ is a fixed point to the equation
$$\Phi(\ell) = \gamma(s + X\ell - d)\vee 0.$$

Define $B(\ell)$ as the $m\times m$ diagonal matrix with 1-0 entries indicating which contracts are activated (i.e., have surpassed the deductible) under $\ell$. Specifically, $B(\ell)_{ii} = 1$ if $(X\ell + s - d)_i \geq 0$ and 0 otherwise. We define a \textbf{$B$-constant set} to be the subset of the domain such that $B$ is a given constant value--i.e., the pre-image of a particular $B$. We will mostly work with $B$-constant sets, so we will refer to $B(\ell)$ as simply $B$. With this terminology, $\Phi$ is equivalent to
$$\Phi(\ell) = \gamma B(X\ell + s - d).$$

Note that this $\Phi$ is nonnegative, monotone increasing (i.e, nondecreasing), and convex as it is the composition of an increasing affine function and a nonnegative, increasing convex function (pointwise maximum). This instance of the problem is similar to the problem considered in \citep{eisenberg01} but without a general upper bound. We solve it in similar ways but provide more direct proofs.

\subsection{Liabilities with contract caps}

In the more general case in which each contract has a deductible and a cap (possibly infinite), there can be multiple layers of reinsurance. Adding the capping effect to the setup started above, the liabilities vector $\ell$ is a fixed point to the equation
$$\Phi(\ell) = \Big(\gamma(X\ell + s - d)\vee 0\Big) \wedge c.$$

We define the following
\begin{itemize}
	\item $C(\ell)$ is the $m\times m$ diagonal matrix with 1-0 entries indicating which edges have surpassed their caps (and so no longer activated); specifically, $C(\ell)_{ii} = 1$ if $\gamma(X\ell + s-d)_{ii} \geq c_{ii}$ and $0$ otherwise.
	\item $\Psi(\ell)$ is a map to a system on the zero diagonal coordinates of $C(\ell)$. Essentially, $\Psi(\ell)$ is $(I-C(\ell))$ where we have dropped the zero rows.
	\item Dropping dependence on $\ell$, $\tilde \gamma = \Psi \gamma \Psi^T$, $\tilde B = \Psi B \Psi^T$, $\tilde X = \Psi X \Psi^T$,
	\item $\bar \ell = C c$, $\tilde v = \Psi(X \bar \ell + s - d)$.
\end{itemize}
We define a \textbf{$(B,C)$-constant set} to be the subset of the domain such that both $B$ and $C$ are given constant values--i.e., the intersection of the pre-image of a particular $B$ and the pre-image of a particular $C$. We will mostly work with $(B,C)$-constant sets, so we will refer to $C(\ell)$ and $\Psi(\ell)$ as simply $C$ and $\Psi$. With this terminology, $\Phi$ is equivalent to
$$\begin{aligned}
\Phi(\ell) &= (I-C)\gamma B \Big( X(I-C)\ell + XCc + (I-C)(s-d)\Big) + Cc \\
&= \Psi^T \tilde \gamma \tilde B (\tilde X \Psi \ell + \tilde v) + \bar \ell.
\end{aligned}$$

Unlike the simpler $\Phi$ without contract caps, which is a subcase  of the more general setting, the $\Phi$ with contract caps is not generally convex. It remains monotone increasing, however. This problem is similar to the problem considered in \citep{acemoglu15}; however, their methodology is limited to the case in which  column sums for the network interaction matrix are $\leq 1$. In the general case of reinsurance layering, column sums of $\gamma X$ can be arbitrarily high. We develop machinery to handle this much more general setting.

Unless specifically pointed out, we will work with the general form of $\Phi$ with contract caps.

\subsection{Unique fixed point}

We characterize conditions under which a unique fixed point exists in Theorem~\ref{prop:unique}. To construct the proof, we will need the following lemmas.

\begin{lemma}\label{lemma:bertsekas}
	A linear system with matrix $A$ is a contraction with respect to some norm if and only if the spectral radius $\rho(A) < 1$. Further, this norm $\|\cdot\|_s$ can be taken to be a weighted Euclidean norm of the form $\|y \|_s = \| M y \|_2$, where $M$ is a square invertible matrix. 
\end{lemma}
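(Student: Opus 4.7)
The plan is to prove both directions of the equivalence, with the more substantive work lying in the ``only if'' direction where the norm must actually be constructed.

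For the forward direction (contraction implies $\rho(A) < 1$), I would argue directly from the definition of induced operator norm. If $A$ is a contraction in some norm $\|\cdot\|_s$, pick any eigenvalue $\lambda$ of $A$ with eigenvector $v \neq 0$. Then $|\lambda|\,\|v\|_s = \|\lambda v\|_s = \|Av\|_s \leq \|A\|_s\,\|v\|_s$, so $|\lambda| \leq \|A\|_s < 1$. Taking the maximum over eigenvalues yields $\rho(A) < 1$. This step is routine.

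For the reverse direction, the plan is to build the weighted Euclidean norm explicitly using a Schur-type (or Jordan) decomposition together with a diagonal rescaling trick. Write $A = Q T Q^{-1}$ where $T$ is upper triangular with the eigenvalues of $A$ on the diagonal (Schur) or in Jordan form. For $\delta > 0$, let $D_\delta = \operatorname{diag}(1,\delta,\delta^2,\ldots,\delta^{n-1})$ and observe that $D_\delta^{-1} T D_\delta$ has the same diagonal as $T$ but its $(i,j)$ off-diagonal entry (for $j > i$) is scaled by $\delta^{j-i}$. By choosing $\delta$ small enough, $\|D_\delta^{-1} T D_\delta\|_2 \leq \rho(A) + \varepsilon$ for any chosen $\varepsilon$ with $\rho(A) + \varepsilon < 1$. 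Setting $M = D_\delta^{-1} Q^{-1}$ and defining $\|y\|_s = \|M y\|_2$ gives a weighted Euclidean norm of the required form. A direct computation then shows
\[
\|Ay\|_s = \|M A y\|_2 = \|D_\delta^{-1} Q^{-1} A Q D_\delta \cdot D_\delta^{-1} Q^{-1} y\|_2 = \|(D_\delta^{-1} T D_\delta)\, M y\|_2 \leq (\rho(A)+\varepsilon)\,\|y\|_s,
\]
so $A$ is a contraction in $\|\cdot\|_s$ with modulus strictly less than $1$, and $M$ is invertible as a product of invertible matrices.

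The main obstacle is ensuring the constructed matrix $M$ is genuinely real and invertible when $A$ has complex eigenvalues; the cleanest fix is to allow $M$ to be complex and to take $\|y\|_s = \|My\|_2$ as a Hermitian weighted norm, or alternatively to use a real Schur form (with $2\times 2$ blocks on the diagonal for conjugate pairs) and verify that the same diagonal scaling argument still drives the off-diagonal blocks to arbitrarily small norm while the $2 \times 2$ diagonal blocks retain 2-norm equal to $|\lambda|$. Once this bookkeeping is handled, everything else is bounded above by $\rho(A) + \varepsilon < 1$ as intended, completing the lemma.
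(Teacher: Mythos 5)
The paper does not prove this lemma itself; it cites Appendix~B of Bertsekas (2013), so there is no in-paper proof for comparison. Your argument is the standard textbook proof and it is correct: the forward direction via the eigenvector inequality $|\lambda|\,\|v\|_s \le \|A\|_s\,\|v\|_s$, and the reverse direction via a Schur (or Jordan) triangularization followed by the conjugation $T \mapsto D_\delta^{-1} T D_\delta$, which shrinks the strictly upper-triangular part to force the operator $2$-norm below $\rho(A)+\varepsilon < 1$.

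On the complex-eigenvalue concern you raise: it is real, but there is a cleaner resolution than reworking the argument with the real Schur form. If the constructed $M$ is complex, note that for real $y$ one has $\|My\|_2^2 = y^{\mathsf T}\,\mathrm{Re}(M^\ast M)\,y$, because $M^\ast M$ is Hermitian so its imaginary part is antisymmetric and drops out of the real quadratic form. The matrix $\mathrm{Re}(M^\ast M)$ is real symmetric positive definite (positive definiteness follows from invertibility of $M$), so it admits a real Cholesky factorization $\mathrm{Re}(M^\ast M) = N^{\mathsf T} N$ with $N$ real and invertible, and then $\|y\|_s = \|Ny\|_2$ is the required real weighted Euclidean norm. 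This observation lets you run your Schur-plus-scaling argument unchanged over $\mathbb{C}$ and then extract a real $M$ at the end, avoiding the block bookkeeping of the real Schur form.
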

A proof of Lemma~\ref{lemma:bertsekas} can be seen in, for example, Appendix B of \citep{bertsekas13}.

\begin{lemma} \label{lemma:BC-constant_sets}
	$(B,C)$-constant sets are convex and form a finite partition of the space $\{ \ell \vert \ell \geq 0\}$.
\end{lemma}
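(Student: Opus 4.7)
The plan is to unpack the definitions of $B(\ell)$ and $C(\ell)$ directly; both claims then reduce to elementary observations about sign conditions on affine functions. Recall that $B(\ell)$ and $C(\ell)$ are diagonal $\{0,1\}$-valued matrices whose $i$th diagonal entries are determined, respectively, by the sign of the affine function $(X\ell + s - d)_i$ and by whether $\gamma_{ii}(X\ell + s - d)_i$ attains $c_i$.

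For convexity, I would fix a target pair of diagonal $\{0,1\}$-matrices $(B,C)$ and write the $(B,C)$-constant set as the intersection, over every coordinate $i$, of the (closed or open) affine half-space forced by the prescribed value of $B_{ii}$, together with the analogous half-space forced by $C_{ii}$, intersected with the nonnegative orthant $\{\ell\geq 0\}$. For instance, $B_{ii}=1$ gives the closed half-space $(X\ell + s - d)_i \geq 0$ while $B_{ii}=0$ gives the open half-space $(X\ell + s - d)_i < 0$, and similarly for $C_{ii}$ using the threshold $c_i/\gamma_{ii}$. Each half-space is convex, and a finite intersection of convex sets is convex, so the $(B,C)$-constant set is convex.

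For the partition claim, since $B(\cdot)$ and $C(\cdot)$ are single-valued functions of $\ell$, every $\ell\geq 0$ belongs to exactly one pre-image $\{\ell : B(\ell) = B,\ C(\ell) = C\}$, so the nonempty $(B,C)$-constant sets partition $\{\ell\geq 0\}$. Finiteness is immediate: $B$ and $C$ are diagonal $m\times m$ matrices with entries in $\{0,1\}$, giving at most $2^m \cdot 2^m = 4^m$ candidate pairs.

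There is no real obstacle here; the statement is essentially definitional. The only subtlety worth flagging in the write-up is that the half-spaces defining the pre-image are a mixture of closed and open ones (depending on whether the corresponding diagonal entry is $1$ or $0$), so a $(B,C)$-constant set need not be closed or open, though it is always convex.
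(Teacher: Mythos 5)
Your proof is correct and follows essentially the same route as the paper's: convexity from writing each $(B,C)$-constant set as an intersection of affine half-spaces (one per coordinate of $B$ and of $C$) with the nonnegative orthant, and the partition from the fact that $B(\cdot)$ and $C(\cdot)$ are single-valued, with finiteness by counting diagonal $\{0,1\}$-matrices. Your bound of $4^m = 2^{2m}$ pairs is the right count (the paper's stated bound of $2^{2^m}$ appears to be a typo for $2^{2m}$), and your explicit remark that the half-spaces are a mix of open and closed—so the pieces partition the orthant cleanly but need not be closed—is the same boundary-convention point the paper makes when it says a unique $(B,C)$ is selected by the definition.
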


See \hyperlink{pf:BC-constant_sets}{proof in the Appendix}.

We now define the terms used in the theorem:
\begin{itemize}
	\item Let $\mathcal{K}(X,\gamma,d,c,s)$ be the set of $(B,C)$ pairs such that the $(B,C)$-constant set is nonempty. I.e., for $(B,C)\in \mathcal{K}$, there is a feasible $\ell$ such that $B(\ell) = B$ and $C(\ell) = C$. Notice that there is no feasible $\ell$ such that $B(\ell)=0$ and $C(\ell)=I$ as the activation of all caps means that all deductibles are also met. Contracts in different layers reinsuring the same firm also cannot simultaneously be activated: we only reach the second layer if the first layer has reached its cap. Additionally, unless all caps are infinite, $B=I$ and $C=0$ is not feasible.
	
	\item Let $\Omega(X,\gamma,d,c,s)$ be the element-wise maximum over all $(B,C) \in \mathcal{K}$ of the matrices \\$(I~-~C) \gamma B X (I~-~C)$. Notice that $(I-C)$ performs the same function as the $\Psi$ map here; however, it maintains zero rows and columns, making the result comparable across different $(B,C)$ pairs. Notice that
	$$(I-C) \gamma B X (I-C) = \Psi^T \tilde \gamma \tilde B \tilde X \Psi.$$
	We will use this tilde notation to simplify the algebra. To distinguish between tilde notation from different $\Psi(C)$, we will use different subscript notations. E.g., For $(B_1,C_1)\in \mathcal{K}$, $\Psi_1~:=~\Psi(C_1)$ and $\tilde \gamma_1 := \Psi_1 \gamma \Psi_1^T$.
\end{itemize}

\begin{theorem} \label{prop:unique}
	Let $(X,\gamma,d,s,c)$ be a financial system and
	$\Omega := \wedge_{(B,C)\in \mathcal{K}} (I-C)\gamma BX(I-C)$
	be the matrix element-wise maximum. Then if $\rho(\Omega)<1$, there is a unique fixed point to $\Phi(\ell; X,\gamma,d,s,c)$.
\end{theorem}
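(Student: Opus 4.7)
My plan is to prove uniqueness by establishing that the componentwise difference between any two fixed points is dominated by $\Omega$ applied to itself, then invoking $\rho(\Omega) < 1$ to collapse this difference to zero.

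First I reduce to an ordered pair. Since $\Phi$ is monotone increasing and satisfies $\Phi(\ell) \leq c$ coordinatewise, the Knaster-Tarski theorem applies on the complete lattice $[0, c]$, yielding existence together with a least fixed point $\ell_{\min}$ and a greatest fixed point $\ell_{\max}$ with $\ell_{\min} \leq \ell_{\max}$. Uniqueness reduces to showing $\delta := \ell_{\max} - \ell_{\min} = 0$.

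Next I derive the componentwise bound $\delta \leq \Omega \delta$. For each coordinate $i$, the scalar map $z \mapsto \min(\max(0,z), c_i)$ is 1-Lipschitz, giving $\delta_i \leq \gamma_{ii}(X\delta)_i$. The fixed-point structure supplies the diagonal pattern I need: on the support of $\delta$, $\ell_{\max, i} > 0$ forces $B(\ell_{\max})_{ii} = 1$, and $\ell_{\min, i} < c_i$ forces $C(\ell_{\min})_{ii} = 0$. To match this bound with $\Omega$, for each pair $(i,j)$ with $\delta_i, \delta_j > 0$ I must exhibit a feasible $(B, C) \in \mathcal{K}$ with $B_{ii} = 1$ and $C_{ii} = C_{jj} = 0$, which forces $\Omega_{ij} \geq \gamma_{ii}X_{ij}$. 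A natural candidate is $\ell^* = \ell_{\min} + \epsilon\delta$ for suitable $\epsilon \in (0, 1)$, chosen small enough that the cap thresholds at $i$ and $j$ are not tripped but large enough that the activation threshold at $i$ is crossed. Assembling these entries yields $\delta \leq \Omega \delta$.

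Finally I iterate: nonnegativity of $\Omega$ and $\rho(\Omega) < 1$ give $\Omega^k \to 0$, so $\delta \leq \Omega^k \delta \to 0$ forces $\delta = 0$. Lemma~\ref{lemma:bertsekas} provides an alternative route via a contracting weighted norm, but the elementary Neumann series argument suffices here. The main obstacle is the feasibility step: a linear interpolation between $\ell_{\min}$ and $\ell_{\max}$ can fail when some coordinates need a large $\epsilon$ to activate $B$ while others need a small $\epsilon$ to avoid tripping $C$. Resolving this likely requires per-entry off-segment constructions of $\ell^*$, or a more refined case analysis exploiting Lemma~\ref{lemma:BC-constant_sets}'s finite convex partition of the domain.
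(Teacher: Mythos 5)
Your strategy---order-theoretic reduction to $\ell_{\min}\le\ell_{\max}$, the elementwise domination $\delta\le\Omega\delta$, and a Neumann-series collapse---is a genuinely different route from the paper's. The paper instead shows $\Phi$ is a global contraction in a weighted Euclidean norm (Lemma~\ref{lemma:bertsekas}) by establishing local contraction on each $(B,C)$-constant set, patching across the convex pieces along shortest paths, and invoking Banach on a compact invariant set. Your proposal as written has two gaps, one you flag and one you do not. First, Knaster--Tarski on $[0,c]$ fails whenever some caps are infinite, a case the theorem explicitly covers; you need the a priori bound the paper extracts from $\rho(\Omega)<1$ (solve the dominating linear system to get a finite $\bar\ell$ bounding every fixed point, and work on $[0,\bar\ell]$) before you even have a complete lattice. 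Second, as you correctly observe, the feasibility step for $\delta\le\Omega\delta$ is the crux, and the segment $\ell^*=\ell_{\min}+\epsilon\delta$ can indeed fail: a coordinate $k$ with $X_{ik}=X_{jk}=1$ and $\delta_k$ large can trip the cap at $j$ before the deductible at $i$ activates, so no single $\epsilon$ works.

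Your instinct that ``per-entry off-segment constructions'' resolve this is exactly right, and a one-coordinate perturbation suffices. Fix $(i,j)$ with $\delta_i,\delta_j>0$ and $X_{ij}=1$, and set $\ell^*=\ell_{\min}+t\,e_j$. Because the line graph has no self-loops, $X_{jj}=0$, so $(X\ell^*)_j=(X\ell_{\min})_j$ is independent of $t$ and $C(\ell^*)_{jj}=C(\ell_{\min})_{jj}=0$ for every $t\ge0$. Because $X_{ij}=1$, the quantity $\gamma_{ii}(X\ell^*+s-d)_i$ increases continuously and without bound in $t$, starting strictly below $c_i$ (since $C(\ell_{\min})_{ii}=0$), so some $t\ge0$ lands it in $[0,c_i)$, giving $B(\ell^*)_{ii}=1$, $C(\ell^*)_{ii}=0$. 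The resulting $(B(\ell^*),C(\ell^*))\in\mathcal{K}$ certifies $\Omega_{ij}\ge\gamma_{ii}X_{ij}$, and summing the $1$-Lipschitz bound over such $j$ yields $\delta\le\Omega\delta$; iterating then gives $\delta=0$. With both patches, your argument is a valid alternative proof that avoids Lemma~\ref{lemma:bertsekas} and the boundary-patching argument entirely, at the cost of the explicit coordinate construction above, which the contraction route does not need.
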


The idea behind this condition is that not all $(B,C)$ pairs are feasible--in particular, if some caps are finite, we will never have to work with all of $\gamma X$ at once--and so we only need to consider the worst cases of the feasible pairs to construct a dominating linear map. Then if the dominating linear map gives a contractive norm on the whole space, the Banach fixed point theorem gives us uniqueness. The \hyperlink{pf:unique}{proof is provided in the Appendix}.

The following corollary describes a more intuitive condition on the spectral radius of the `full' graph $\gamma X$. However, this simpler condition does not cover general layering structure. In particular, column sums are restricted to $\leq 1$. Notice that the only condition of the corollary is that $\rho(\gamma X)<1$, which further means that any such system leads to a unique fixed point for every possible shock.
\begin{corollary} \label{cor:unique}
	Given a financial system $(X,\gamma, d, c, s)$, if the spectral radius $\rho(\gamma X) < 1$, there is a unique fixed point to $\Phi$.
\end{corollary}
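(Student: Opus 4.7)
The plan is to reduce the corollary directly to Theorem~\ref{prop:unique} by showing that the hypothesis $\rho(\gamma X) < 1$ implies the (more technical) hypothesis $\rho(\Omega) < 1$. This avoids re-proving any fixed-point machinery and just requires a short argument comparing spectral radii of nonnegative matrices.

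The first step is to examine a generic summand in the element-wise max defining $\Omega$. For any $(B,C) \in \mathcal{K}$, both $B$ and $C$ are $\{0,1\}$-diagonal matrices, so $B$ and $(I-C)$ each have entries in $\{0,1\}$ on the diagonal. Because $\gamma X$ has nonnegative entries, left-multiplication by $(I-C)\gamma B$ and right-multiplication by $(I-C)$ can only zero out rows/columns and rescale entries by $\gamma$, never increase them. Concretely, I would verify element-wise that
\[
0 \;\leq\; \bigl((I-C)\gamma B X (I-C)\bigr)_{ij} \;\leq\; (\gamma X)_{ij}
\]
for every $i,j$ and every feasible $(B,C)$.

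Taking the element-wise maximum over the finite set $\mathcal{K}$ preserves this inequality, giving $0 \leq \Omega \leq \gamma X$ entrywise. Now I would invoke the standard monotonicity of the spectral radius on nonnegative matrices (a direct consequence of the Perron--Frobenius theorem): if $0 \leq A \leq M$ element-wise, then $\rho(A) \leq \rho(M)$. Applied here, this yields $\rho(\Omega) \leq \rho(\gamma X) < 1$.

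With this inequality in hand, the hypothesis of Theorem~\ref{prop:unique} is satisfied for the system $(X,\gamma,d,c,s)$, so $\Phi$ has a unique fixed point, which is precisely the claim. I do not expect any genuine obstacle; the only subtlety to be careful about is citing the monotonicity of $\rho$ under entrywise domination of nonnegative matrices, rather than for arbitrary matrices, which is exactly the regime we are in here.
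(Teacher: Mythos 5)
Your proposal is correct and matches the paper's intent. The paper does not write out a separate proof of Corollary~\ref{cor:unique}, but it explicitly uses the same ingredient in the proof of Theorem~\ref{prop:unique} (``Since our matrices are non-negative, the Perron-Frobenius theorem gives us that for $(B,C)\in\mathcal{K}$, $\rho((I-C)\gamma BX(I-C))\leq\rho(\Omega)$'') and again in Proposition~\ref{prop:no_caps_iter_converge} (``$\rho(\gamma BX)\leq\rho(\gamma X)$ \ldots since $B$ only serves to remove edges''); your chain $0\leq\Omega\leq\gamma X$ entrywise, hence $\rho(\Omega)\leq\rho(\gamma X)<1$, is exactly the reduction the authors have in mind, and your remark to invoke Perron--Frobenius monotonicity specifically in the nonnegative regime is the right caveat.
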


In the case of all infinite caps (i.e., effectively no caps), Corollary~\ref{cor:unique} gives a result similar to Theorems 1 and 2 in \citep{eisenberg01}. We note that our proof is more direct and general. The more general condition on the spectral radius of $\gamma X$ enables the more powerful Banach fixed point theorem to prove the result directly. This method can apply to a broader set of problems, whereas the proof in \citep{eisenberg01} requires minutiae of the specific contagion mechanism and the relations between different firms to arrive at the result.

The general case of Corollary~\ref{cor:unique} is similar to Proposition~1 in \citep{acemoglu15}, which considers clearing vectors in a liabilities network with external senior debt. However, the theorem in \citep{acemoglu15} only applies to matrices of connectivity with all entries strictly positive since the proof relies on the positive version of the Perron-Frobenius theorem. To handle non-negative matrices, we need to require the spectral radius be $<1$ since eigenvalues can otherwise be 1.

Theorem~\ref{prop:unique} and the results we derive in the following sections venture well beyond the setting and results in \citep{acemoglu15} to describe fixed points that apply for the full range of layering structure that can be seen in reinsurance networks. In particular, we need to allow column sums of $\gamma X$ to be $>1$ since there can be multiple complete layers of reinsurance.

A natural question is whether we can strengthen Theorem~\ref{prop:unique} to a wider setting. Conditioning on $\rho(\Psi \gamma B X \Psi^T) < 1$ for all $(B,C)$ pairs that partition the domain into non-empty $(B,C)$-constant sets (and recalling that $C$ defines $\Psi$), $\Phi$ is everywhere a local contraction--i.e., $\Phi$ is a contraction restricted to each $(B,C)$-constant set by some metric. We can further show that $\Phi$ is globally nonexpansive. We conjecture that, under these conditions, $\Phi$ has a unique fixed point. However, this problem is more challenging because we need to establish a metric over which the function is globally contractive in order to use the existing machinery. We leave this as further work.

\subsection{Other cases: unique, multiple, and no fixed points}

Problematic graph structure can cause $\Phi$ to be non-contractive. This occurs when circular sequences of contracts allow 100\% reinsurance to be continually recirculated through a given set of nodes. We will refer to an instance of this as a `\textbf{100\% cycle}'. Cycle here refers to the graph theoretic meaning as opposed to the economic meaning. Figure~\ref{fig:cycles} provides three examples of how this can happen. Figure~\ref{fig:cycle1} is the simplest example that directly recirculates 100\% reinsurance around one cycle. Figure~\ref{fig:cycle2} shows that multiple cycles can interact to recirculate 100\% reinsurance to a central node. Figure~\ref{fig:cycle_complete} shows that in the most extreme case of a complete graph with all $\Gamma=1/(n-1)$, 100\% reinsurance can be recirculated to every node in the network: as all weights are $1/2$, the reinsurance that can be recirculated to each node can be a geometric sequence that converges to 1.

\begin{figure}
	\centering
	\begin{subfigure}[b]{0.4\textwidth}
		\includegraphics[width=\textwidth]{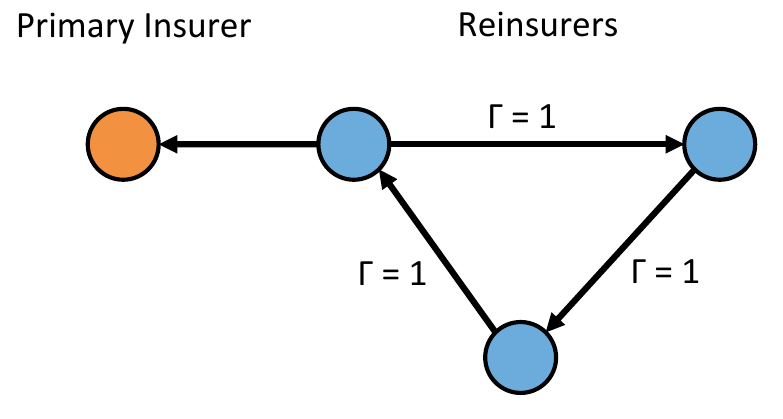}
		\caption{A direct 100\% cycle.}\label{fig:cycle1}
	\end{subfigure}
	~ 
	\begin{subfigure}[b]{0.4\textwidth}
		\includegraphics[width=\textwidth]{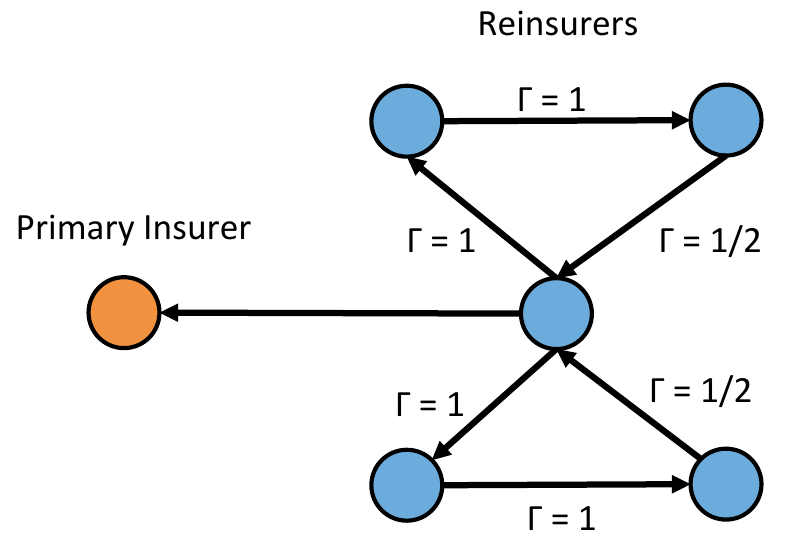}
		\caption{100\% cycle from two interacting cycles.}\label{fig:cycle2}
	\end{subfigure}
	~ 
	\begin{subfigure}[b]{0.5\textwidth}
		\includegraphics[width=\textwidth]{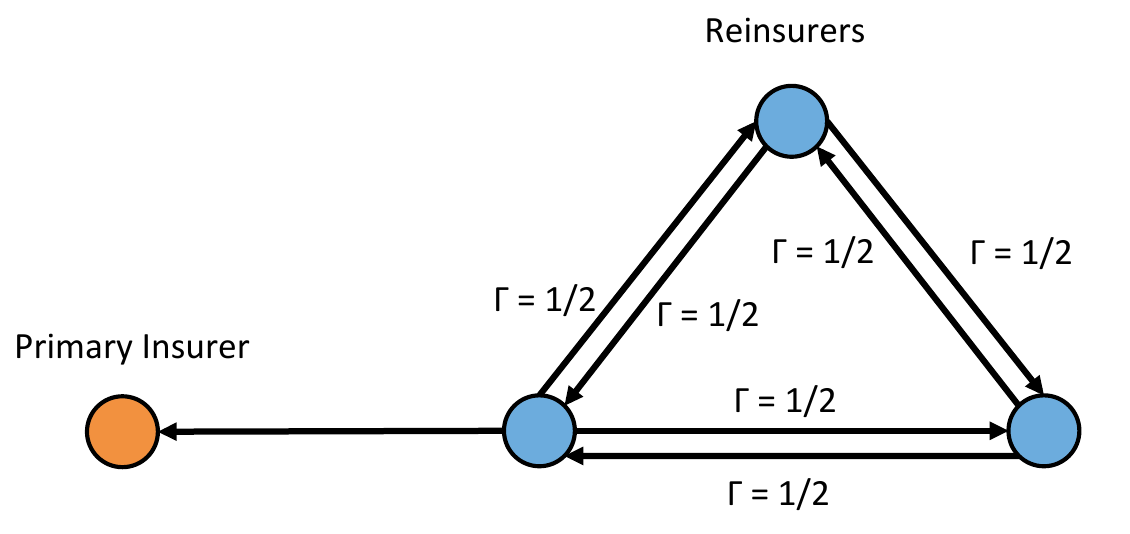}
		\caption{Many interacting cycles can form a 100\% cycle.}\label{fig:cycle_complete}
	\end{subfigure}
	\caption{Some examples of 100\% cycles.}\label{fig:cycles}
\end{figure}

\subsubsection{100\% cycles: no caps case}
For simplicity, we first describe the effects of these 100\% cycles from the perspective of a system with infinite/no caps, in which $\gamma X$ mostly describes the entire system.

Analytically, these 100\% cycles cause the matrix powers $(\gamma X)^k$ to fail to converge to 0 as $k \rightarrow \infty$ since we enter an infinite increasing loop. On the other hand, the condition on the spectral radius $<1$ from Corollary~\ref{cor:unique} ensures that $\lim_{k\rightarrow \infty} (\gamma X)^k = 0$. Checking the spectral radius is a simple check of whether a 100\% cycle exists; however, it may be difficult to identify the actual cycle in the network. As shown by the examples in Figure~\ref{fig:cycles}, a problematic cycle can be a complex interaction of many graph cycles. A naive method to search for the problematic cycle would involve iterating over graph cycles in the network, which is itself NP-hard.

If $\rho(\gamma X) \geq 1$ in this setting, a 100\% cycle exists. In this case, there may still be a unique fixed point. If not, there may be a smallest fixed point or there may be no fixed point. The following characterize these cases and follow from the main theorems in this section.

\paragraph{Unique fixed point if 100\% cycles are not activated.} The term $B$ in $\gamma BX$ serves to remove edges from the resulting graph. Since the spectral radius of a proper subgraph is less than the spectral radius of the initial connected graph, we may have local contraction on some $B$-constant sets, but not on the whole domain. Depending how far the shock $s$ spreads, there may still be a unique fixed point on the contractive $B$-constant sets. In this case, 100\% cycles that cause the spectral radius to be $\geq 1$ are not activated. We can restrict the domain to the contractive region to find a unique fixed point. Notably, such a system will not yield a fixed point for all possible shocks.
	
\paragraph{No fixed point if 100\% cycles are activated.} When 100\% cycles are activated by shocks, the system is non-contractive. In this case, there is no fixed point, and some nominal liabilities will increase to infinity as contracts call circularly in 100\% cycles. For example, say that the primary insurer faces a shock of 10 in Figure~\ref{fig:cycle1}. Then this loss is passed around the cycle because the reinsurance rate is 100\%. The first reinsurer faces a loss of 20, which is again passed around the cycle, and so on. Since $\Phi$ is monotone, these ``infinite'' fixed points are the only cases of nonexistence. This could happen but is extremely unlikely in practice. The only contracts without caps are proportional and not reinsured at 100\%. A 100\% cycle in this case would be quite contrived.
	
\paragraph{Multiple fixed points from self-reinforcing claims.} If 100\% cycles have their deductibles exactly met from outside claims but are otherwise unactivated, we have multiple solutions. A simple example of this is a 100\% cycle with zero deductibles and zero outside claims. Any value on these contracts is self-fulfilling, and so there are infinitely many fixed points. Notice that each 100\% cycle is self-contained due to the assumption that no firm can reinsure over 100\%. This means that the self-fulfilling solutions in each such cycle are independent. Then the set of fixed points looks like a Cartesian product of individual solution subsets related to the 100\% cycles in the network that have deductibles exactly met by outside claims.

\subsubsection{100\% cycles: general caps case}

Corollary~\ref{cor:unique} tells us cases in which we can prove unique fixed points for all shocks. Theorem~\ref{prop:unique} covers more cases. Outside of these theorems, there may still be unique fixed points in other systems: in some cases if the shocks do not activate problematic network structure, and in other cases there may be no problematic network structure to worry about. When we are not guaranteed uniqueness, we can see similar effects as in the no caps case of multiple fixed points or no fixed points. The following characterize these situations and follow from the main theorems in this section.

\paragraph{Caps limit non-contractive effects.}
Even if 100\% cycles are activated, finite contract caps can limit the non-contractive effect to particular layers that reach capacity. In this case, the finite caps remove the problematic graph structure from the remainder of the problem.

\paragraph{Multiple fixed points bounded within layers.}
If a 100\% cycle has deductibles exactly met from outside claims but remains otherwise unactivated, we have multiple solutions as before. However, these solutions are constrained by the caps and restricted to self-reinforcing liabilities within particular layers of reinsurance.

As we will see in the following subsections, in the case of multiple fixed points, there is always a least fixed point that represents the real world solution. Other fixed points are mathematical artifacts from self-reinforcing liabilities that are not propagations of primary insurance claims.

\paragraph{No fixed point if uncapped 100\% cycle is activated.}
As before, the only instances of nonexistence are when fixed point iteration diverges to infinity on some liabilities. This only happens when an uncapped 100\% cycle is activated. Since $\Phi$ is monotone, all other problematic structures are eventually capped out and so do not contribute to nonexistence. For the same reasons as before, this is extremely unlikely to occur in practice.

\subsection{Least fixed points}

In the event that we have multiple solutions, there is a least fixed point as the others are self-fulfilling and not caused by actual claims. This is formalized in the following theorem.

\begin{theorem} \label{prop:kleene}
	For financial system $(X,\gamma, d, c, s)$, if a fixed point of $\Phi(\ell; X,\gamma,d,c,s)$ exists, then there is a least fixed point. Further, fixed point iteration starting at 0 converges to the least fixed point.
\end{theorem}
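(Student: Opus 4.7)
The plan is to run a standard Kleene-style fixed-point iteration argument that exploits two properties of $\Phi$ already in hand: monotonicity and continuity. Monotonicity of $\Phi$ in $\ell$ (coordinatewise) is stated immediately after the definition of $\Phi$ in the caps case; continuity across the $(B,C)$-constant boundaries was established in the course of proving Theorem~\ref{prop:unique} (the two alternative choices of $B$ and $C$ agree on the boundary, so the piecewise-linear pieces glue continuously). I also need that $\Phi$ sends $\{\ell : \ell \geq 0\}$ into itself, which is immediate from the definition $\Phi(\ell) = (\gamma(X\ell+s-d)\vee 0)\wedge c$ together with $c\geq 0$.

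First I would initialize $\ell^{(0)} = 0$ and define $\ell^{(k+1)} = \Phi(\ell^{(k)})$. Since $\Phi(0)\geq 0 = \ell^{(0)}$, monotonicity gives $\ell^{(k+1)} = \Phi(\ell^{(k)}) \geq \Phi(\ell^{(k-1)}) = \ell^{(k)}$ by an easy induction, so the iterates form a coordinatewise nondecreasing sequence in $\{\ell \geq 0\}$.

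Next I would let $\ell^*$ be any fixed point (one exists by hypothesis). Then $\ell^* \geq 0 = \ell^{(0)}$, and applying $\Phi$ to both sides $k$ times and using monotonicity together with $\Phi^k(\ell^*) = \ell^*$ gives $\ell^* \geq \ell^{(k)}$ for every $k$. Hence $\{\ell^{(k)}\}$ is monotone and bounded coordinatewise above by $\ell^*$, so it converges coordinatewise to some $\ell^{\infty} \leq \ell^*$. By continuity of $\Phi$, $\Phi(\ell^{\infty}) = \lim_k \Phi(\ell^{(k)}) = \lim_k \ell^{(k+1)} = \ell^{\infty}$, so $\ell^{\infty}$ is itself a fixed point. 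Because $\ell^{\infty} \leq \ell^*$ for every fixed point $\ell^*$, it is the least fixed point, and the iteration starting from $0$ converges to it, which proves both claims.

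The argument is essentially routine once monotonicity, continuity, and nonnegativity of $\Phi$ are in place, so there is no real obstacle. The only subtle point worth flagging is that continuity of $\Phi$ crosses the nondifferentiable thresholds where $B$ or $C$ changes, but this is exactly what was already argued when extending the contraction bound across $(B,C)$-constant boundaries in the proof of Theorem~\ref{prop:unique}, so it can simply be invoked here.
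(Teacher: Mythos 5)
Your proof is correct, but it takes a genuinely different and in fact more elementary route than the paper's. The paper works in the extended space $\bar{\mathbb{R}}^m$ (allowing $\infty$ in coordinates), establishes that the nonnegative orthant there is an $\omega$-complete poset, verifies $\omega$-continuity of $\Phi$ (which requires a careful argument about what happens when some coordinates of a supremum are $+\infty$: after finitely many activations $\Phi$ becomes affine, so suprema are preserved), and then invokes the Kleene fixed point theorem as a black box. Since Kleene a priori only gives a fixed point in the extended space, the paper must additionally use the hypothesis that a finite fixed point exists to conclude the least one is finite. Your argument short-circuits all of this: by bounding the Kleene iterates $\ell^{(k)} = \Phi^k(0)$ above by an arbitrary finite fixed point $\ell^*$ from the outset (using monotonicity of $\Phi$ and $0 \leq \ell^*$), you keep everything inside $\mathbb{R}^m$, where ordinary continuity of $\Phi$ (a composition of affine maps, $\vee 0$, and $\wedge c$) and the coordinatewise monotone convergence theorem suffice. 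The tradeoff is that the paper's proof isolates a reusable structural fact ($\omega$-continuity of $\Phi$ in the extended space), whereas yours is self-contained and avoids the extended-reals bookkeeping entirely; for the theorem as stated, your route is shorter and requires less machinery.
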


Note that the Tarski fixed point theorem cannot be used in this setting because the domain lattice is not necessarily bounded above. Instead, our proof relies on the Kleene fixed point theorem. The Kleene fixed point theorem is additionally constructive, and so guarantees that fixed point iteration converges to the least fixed point if it exists. We present the \hyperlink{pf:kleene}{proof of Theorem~\ref{prop:kleene} in the Appendix}, including an overview of the Kleene fixed point theorem.

The next theorem gives us a conditions under which we can apply the Tarski fixed point theorem, in which case we know outright that a least (and greatest) fixed point exists.

\begin{theorem}\label{prop:tarski}
	Let $(X,\gamma, d, c, s)$ be a financial system. Let $\Psi_0$ map to a system on the edges with infinite caps (or map to the zero matrix if all caps are finite). Then if the spectral radius $\rho(\Psi_0 \gamma X \Psi_0^T) < 1$, $\Phi$ has least and greatest fixed points.
\end{theorem}

See \hyperlink{pf:tarski}{proof in the Appendix}.

Note that this means that, if all caps are finite, there is always a maximum fixed point. This means that liabilities cannot spiral to infinity. In practice, most reinsurance contracts have caps, and so there will be a resulting equilibrium liability structure.

We have provided a wide variety of cases for which we are guaranteed least fixed points. In the next subsection, we show that the least fixed point is the real world solution of interest.

\subsection{Multiple fixed points: net liabilities equal}

Recall that $L$ is the equilibrium firm-to-firm liabilities matrix. Rows represent liabilities from the row firm to each column firm. Define the net liabilities of each firm as a vector
$$\Delta(L) := L^T e - Le,$$
where $e$ is the all ones vector. $(L^T e)_i$ is what $i$ is due from other firms. $(L e)_i$ is what $i$ owes to other firms. The following theorem shows that net liabilities of firms are constant across multiple fixed points of $\Phi$.

\begin{theorem} \label{prop:net_liabilities_equal}
	If $L,L'$ are fixed points of $\Phi$, then $\Delta(L) = \Delta(L')$. I.e., the net liabilities of each firm are equivalent under any fixed point.
\end{theorem}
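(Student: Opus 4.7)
The plan is to reduce the claim for two arbitrary fixed points to comparing each against the least fixed point, and then to establish a firm-by-firm flow-balance identity. Working in the line-graph representation, a fixed point is a contract-liability vector $\ell$ (equivalent to the firm-to-firm matrix $L$), and $\Delta_i = \sum_{k:\mathrm{head}(k)=i}\ell_k - \sum_{k:\mathrm{tail}(k)=i}\ell_k$. By Theorem~\ref{prop:kleene} together with the fact that $\Phi\ge 0$ (so every fixed point lies in $\{y\ge 0\}$), whenever any fixed point exists there is a least one $\ell_0$ satisfying $\ell_0\le\tilde\ell$ for every fixed point $\tilde\ell$. It therefore suffices to show $\Delta(\ell)=\Delta(\ell_0)$ whenever $\ell$ is a fixed point; applying this to $\ell$ and to $\ell'$ in turn yields the theorem.

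Fix such $\ell_0\le\ell$ and set $\delta:=\ell-\ell_0\ge 0$. For each firm $i$, write $\mathrm{in}(i,\mu):=\sum_{k:\mathrm{head}(k)=i}\mu_k$ and $\mathrm{out}(i,\mu):=\sum_{k:\mathrm{tail}(k)=i}\mu_k$, so $\Delta_i=\mathrm{in}(i,\cdot)-\mathrm{out}(i,\cdot)$. Since each coordinate $\delta_k$ is counted exactly once in each global sum,
$$\sum_i \mathrm{in}(i,\delta) \;=\; \sum_k \delta_k \;=\; \sum_i \mathrm{out}(i,\delta).$$
The aim is to upgrade this global balance to a firm-by-firm balance, for which it is enough to show the local inequality $\mathrm{in}(i,\delta)\le\mathrm{out}(i,\delta)$ for every $i$: combined with the equal global sums, any such family of inequalities must hold termwise with equality.

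The local inequality comes from viewing the reinsurance received by $i$ as a single-variable function. Because $X_{kj}=1$ exactly when $\mathrm{head}(k)=\mathrm{tail}(j)$, and because the contract shock $s_k$ is the primary shock $sh_{\mathrm{head}(k)}$ to the reinsured firm, $(X\mu+s)_k$ collapses to the total loss $T_i(\mu):=sh_i+\mathrm{out}(i,\mu)$ of $i=\mathrm{head}(k)$. The fixed-point equation then gives
$$\mathrm{in}(i,\mu) \;=\; \sum_{k:\mathrm{head}(k)=i}\min\bigl(c_k,\max(0,\gamma_k(T_i(\mu)-d_k))\bigr) \;=:\; R_i(T_i(\mu)).$$
$R_i$ is piecewise linear and nondecreasing, and where differentiable its slope equals $\sum_{k\in A_i(L)}\gamma_k$, summed over contracts with head $i$ that at loss level $L$ are above their deductible and below their cap. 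The layering convention -- contracts in different layers activate on disjoint loss ranges, and within a single layer the coinsurance rates sum to at most $1$ -- yields $R_i'\le 1$ almost everywhere. Hence for $T_i:=T_i(\ell_0)\le T_i(\ell)=:T_i'$,
$$\mathrm{in}(i,\delta) \;=\; R_i(T_i')-R_i(T_i) \;\le\; T_i'-T_i \;=\; \mathrm{out}(i,\delta),$$
which is the desired local inequality.

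The main obstacle is the slope bound $R_i'\le 1$: this is where the economic principle that no firm can be more than fully reinsured does substantive work, since although total column sums of $\Gamma$ may exceed $1$ across layers, only the currently active layer contributes at any single loss level and within that layer the coinsurance rates sum to at most $1$. Everything else reduces to bookkeeping on the line graph together with the elementary observation that a family of pointwise inequalities with equal total sums must hold with equality.
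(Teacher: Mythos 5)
Your proposal is correct and takes essentially the same route as the paper's: you reduce to comparison against the least fixed point, establish a global flow-balance identity (the paper's Lemma~\ref{lemma:net_liabilities_2}), prove a local one-sided inequality per firm via the observation that the received reinsurance $R_i$ has slope at most $1$ owing to the $100\%$-per-layer cap (the paper's Lemma~\ref{lemma:net_liabilities_1}), and conclude termwise equality from pointwise inequalities summing to equal totals. Your treatment of the slope bound is somewhat more explicit about why layering does not break the Lipschitz estimate, but the decomposition and the key lemmas coincide with the paper's.
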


We will need the following two lemmas to prove the theorem.

\begin{lemma} \label{lemma:net_liabilities_1}
	If $L,L'$ are fixed points of $\Phi$ with $L \geq L'$ (entry-wise), then $\Delta(L) \leq \Delta(L')$.
\end{lemma}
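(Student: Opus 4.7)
The plan is to reduce the statement to a firm-level identity and then invoke the paper's no-overinsurance assumption in the form of a $1$-Lipschitz bound on each firm's total reinsurance function.

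First, I would rewrite the fixed-point equation at the firm level. For each firm $v$, let $\mathrm{loss}_v(L) := sh_v + (Le)_v$ be its total loss (primary shock plus what it owes on outgoing contracts as reinsurer), and define
\[
R_v(x) \;:=\; \sum_{i:\, v_i = v} \bigl[\gamma_i (x - d_i) \vee 0\bigr] \wedge c_i,
\]
where the sum runs over contracts $i$ in which $v$ is the reinsured party. At a fixed point, each contract liability satisfies $\ell_i = \bigl[\gamma_i\bigl(s_i + (X\ell)_i - d_i\bigr)\vee 0\bigr] \wedge c_i$, and by construction of the line graph the quantity $s_i + (X\ell)_i$ is exactly $\mathrm{loss}_{v_i}(L)$: $s_i$ carries the primary shock to $v_i$, while $(X\ell)_i$ aggregates $v_i$'s outgoing reinsurance liabilities via the line-graph edges incident on $i$. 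Summing over contracts $i$ with $v_i = v$ yields the firm-level identity $(L^T e)_v = R_v(\mathrm{loss}_v(L))$.

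Second, I would establish the key analytic fact that $R_v$ is nondecreasing and $1$-Lipschitz on $[0,\infty)$. Monotonicity is immediate because each summand is nondecreasing. Since $R_v$ is a finite sum of clipped ramps it is piecewise linear, so it suffices to bound its slope pointwise. The slope at $x$ equals the sum of $\gamma_i$ over contracts $i$ incident on $v$ that are presently activated but not saturated (above deductible, below cap). By the tranche structure of a reinsurance tower, the contracts that are simultaneously activated-but-not-saturated at any $x$ all lie within a single layer of $v$'s reinsurance, and the paper's standing assumption that column sums of $\Gamma$ restricted to a single layer are $\leq 1$ bounds this slope by $1$.

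Third, I would combine these pieces. Given fixed points with $L \geq L'$ entrywise, $(Le)_v \geq (L'e)_v$ and hence $\mathrm{loss}_v(L) \geq \mathrm{loss}_v(L')$ for every firm $v$. The monotone $1$-Lipschitz bound on $R_v$ then gives
\[
(L^T e)_v - (L'^T e)_v \;=\; R_v\bigl(\mathrm{loss}_v(L)\bigr) - R_v\bigl(\mathrm{loss}_v(L')\bigr) \;\leq\; \mathrm{loss}_v(L) - \mathrm{loss}_v(L') \;=\; (Le)_v - (L'e)_v,
\]
the last equality because $sh_v$ is unchanged between the two fixed points. Rearranging yields $\Delta(L)_v = (L^T e)_v - (Le)_v \leq (L'^T e)_v - (L'e)_v = \Delta(L')_v$ for each $v$, as claimed.

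The main obstacle I anticipate is the $1$-Lipschitz step. The per-layer column-sum assumption does \emph{not} imply $\sum_{i:\,v_i=v}\gamma_i \leq 1$ globally across layers, so the bound cannot be obtained by naively summing all coinsurance rates incident to $v$; multiple stacked layers can inflate that total well beyond $1$. The argument has to exploit the tranche structure to show that at any given marginal loss level only a single layer is on its linear portion, which is where the per-layer assumption applies.
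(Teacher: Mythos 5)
Your proof is correct and follows essentially the same route as the paper: both reduce the claim to a firm-level identity expressing inbound reinsurance $(L^T e)_v$ as a nondecreasing function $R_v$ (the paper's $f^i$) of the firm's total exposed loss $sh_v + (Le)_v$, and both conclude by bounding the slope of that function by $1$ so that $\Delta_v = R_v - \mathrm{id}$ is nonincreasing. Your handling of the $1$-Lipschitz step is in fact more explicit than the paper's: you correctly note that the per-layer column-sum assumption alone does not bound $\sum_{i:\,v_i=v}\gamma_i$, and that one must invoke the tranche structure so that at any loss level at most one layer of $v$'s tower is on its linear (activated, unsaturated) portion -- a subtlety the paper compresses into the phrase ``reinsurance is limited to $100\%$.''
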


\begin{lemma} \label{lemma:net_liabilities_2}
	If $L$ is a fixed point of $\Phi$, then $\sum_i \Delta_i(L) = 0$.
\end{lemma}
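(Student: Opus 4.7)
The plan is to recognize Lemma~\ref{lemma:net_liabilities_2} as a pure conservation identity, analogous to Kirchhoff's current law: whatever one firm owes on a reinsurance contract is owed to another firm somewhere in the network, so when we sum signed net positions across all firms the contributions cancel. Crucially, this identity does not really use the fixed-point structure of $\Phi$ at all; it holds for any finite nonnegative firm-to-firm liabilities matrix $L$. The fixed-point hypothesis is only invoked to guarantee that $L$ actually has finite entries, which is consistent with the setting of Theorem~\ref{prop:kleene} where a fixed point in $\mathbb{R}^m$ (as opposed to the extended $\bar{\mathbb{R}}^m$) is assumed to exist.

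The argument itself is a short double-sum swap. I would expand $\sum_i \Delta_i(L) = e^T \Delta(L) = e^T L^T e - e^T L e$, then note that $e^T L e = \sum_i (Le)_i = \sum_{i,j} L_{ij}$ by summing row sums, while $e^T L^T e = \sum_i (L^T e)_i = \sum_{i,j} L_{ji}$ by summing column sums. After relabeling the indices, both double sums equal the total of all entries of $L$, so the two terms cancel and $\sum_i \Delta_i(L) = 0$.

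There is essentially no obstacle here beyond confirming well-definedness. The one point to flag is that, were we to work in the extended partial order $\bar{\mathbb{R}}^m$ appearing in the proof of Theorem~\ref{prop:kleene}, an $L$ with infinite entries would produce the ill-defined expression $\infty - \infty$. The statement sidesteps this by requiring $L$ to be a fixed point of $\Phi$ in $\mathbb{R}^m$, which forces every contract liability (and hence every entry of the aggregated matrix $L$) to be finite, so the cancellation above is rigorously justified. This makes the identity the natural companion to Lemma~\ref{lemma:net_liabilities_1}: together with the entrywise monotonicity $L \geq L' \Rightarrow \Delta(L) \leq \Delta(L')$, it will pin down $\Delta$ uniquely across fixed points in Theorem~\ref{prop:net_liabilities_equal}, since an entrywise inequality between two vectors with equal coordinate-sum must in fact be an equality.
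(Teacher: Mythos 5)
Your proof is correct and uses essentially the same double-sum swap as the paper, just phrased in the vectorized form $e^T L^T e - e^T L e$ rather than with explicit indices. Your added remarks---that the fixed-point hypothesis is used only to guarantee finiteness of $L$, and that the identity is really a conservation law independent of $\Phi$---are accurate and align with the paper's own informal comment that the reinsurance system distributes rather than amplifies losses.
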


See \hyperlink{pf:net_liabilities_1}{proof of Lemma~\ref{lemma:net_liabilities_1}} and \hyperlink{pf:net_liabilities_2}{proof of Lemma~\ref{lemma:net_liabilities_2}} in the Appendix. Lemma~\ref{lemma:net_liabilities_2} is rather immediate because the reinsurance system does not amplify losses--it only distributes losses across the network. As the initial shock is not included in $L$, the terms in $L$ sum to 0. Once we have established these lemmas, the \hyperlink{pf:net_liabilities_equal}{proof of Theorem~\ref{prop:net_liabilities_equal}, which is included in the Appendix}, is similar to that of Theorem~1 in \citep{eisenberg01}.

As net liabilities are equivalent between fixed points, the least fixed point corresponds to the real world solution. It represents the propagation of primary insurance shocks, whereas other fixed points add self-fulfilling liabilities on top of this. To see this, note that fixed point iteration starting from zero represents stepwise propagation of primary insurance shocks and converges to the least fixed point. Then, from Theorem~\ref{prop:net_liabilities_equal}, any greater fixed points can only come from adding additional liabilities that net out.

\subsection{Consequences of multiple fixed points}

Note, however, that Theorem~\ref{prop:net_liabilities_equal} does not imply that the \emph{clearing} of different fixed point liabilities are equivalent. In general, the clearing will depend on the \emph{nominal} liabilities as opposed to the net liabilities. Figure~\ref{fig:clearing_diff} gives an example where different fixed points lead to different clearing outcomes. The liability on edge $(B,A)$ is $L_{BA} = 10$ in any fixed point. However, $L_{CB}=L_{BC}=0$ and $L_{CB}=L_{BC}=10$ are both valid fixed points (the net liabilities are the same for both fixed points
$\Delta L_A = -10$, $\Delta L_B = 10$, $\Delta L_C = 0$). If the capital of firm $B$ has zero value, $A$ will receive zero payment after clearing in the minimum fixed point whereas in the $L_{CB}=L_{BC}=10$ fixed point, $A$ will receive $5$ and $C$ will pay $5$.

\begin{figure}
	\centering \includegraphics[width=8cm]{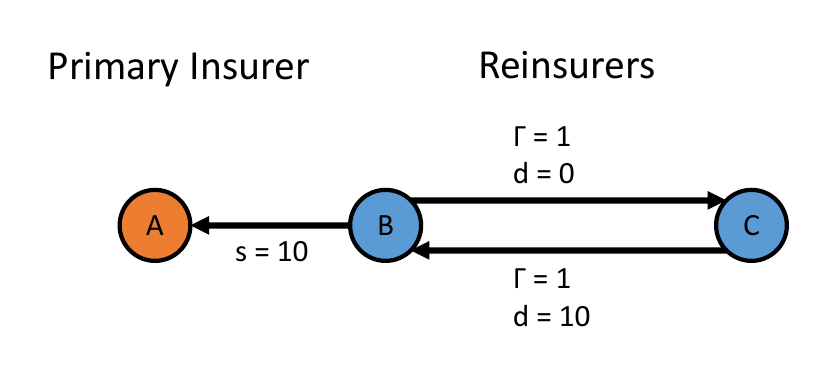}
	\caption{Example in which different fixed points lead to different clearing payments.}\label{fig:clearing_diff}
\end{figure}

The only reason a system would find itself in a non-least fixed point is from fraud. In this case, someone has inflated their liabilities such that the network makes it self-fulfilling. In the case of Figure~\ref{fig:clearing_diff}, firm $A$ is better off in a non-minimal fixed point and could have an incentive to influence (outside of our model) firm $B$. Firm $B$ could present fraudulent higher claims to firm $C$, which would be self-fulfilling from the 100\% cycle. This would cause a non-minimal fixed point in liabilities. If the network is complicated, this fraud could be difficult to uncover. 

\begin{remark}
There are two other realistic mechanisms that can cause multiplicity of solutions, which pose governance challenges:
\begin{itemize}
	\item The parameters of many reinsurance contracts are not well defined, even to the parties of the contracts. It is a common practice in the reinsurance industry to agree to `in the future agree on a specific contract'.\footnote{Private conversations with an insurance industry executive. All errors are our own.} In extreme cases, these `contracts' have been litigated to determine what contract would have been reasonably agreed upon. In this case, the global contract parameters are not in principle knowable, as assumed in our model, and there are additional potential solutions for the different potential versions of the unknown contract.
	
	\item Given liabilities, multiple fixed points for determining clearing payments can also exist when costs of default are nonzero \citep{veraart13}. This is realistically the case as there are legal, transactional, and liquidity costs associated with real world defaults.
\end{itemize}	

In either case, if there is disagreement in the payouts of reinsurance contracts, such as from multiple potential solutions, the issue goes to a panel of arbitors to resolve \citep{reins_arbitration}. The members of the panel are typically active or former executive officers of insurance or reinsurance companies \citep{reins_arbitration} and will have different incentives. For example, these could include the following: driving a competitor out of the market, limiting contagion to given markets, or pinning default on parties that are least connected to themselves. Even when the arbitors do not have direct conflicts of interest, indirect conflicts of interest are unavoidable through network structure. The arbitors will have different perceived risk exposure to the various solutions. These incentives are outside of the focus of this study; our purpose is to illustrate that cases like this can happen. We leave it to future work to model these incentives and design good governance structures to account for these.
\end{remark}

We have widely characterized least fixed points of $\Phi$. In the next section, we provide efficient algorithms for finding these fixed points.

\subsection{Algorithms to find the least fixed point}

To find the minimum fixed point, if it exists, we can perform a fixed point iteration of $\Phi$ starting at 0. Algorithm~\ref{alg:push} performs this fixed point iteration. The constructive statement of the Kleene fixed point theorem guarantees that Algorithm~\ref{alg:push} converges to the minimum fixed point, if it exists. In practice, this runs efficiently. However, in the worst case, it can take arbitrarily long. To see this, consider an arbitrarily small damping of the near-100\% cycle from Figure~\ref{fig:damped_cycle} with infinite caps. The fixed point iteration operates by pushing mass iteratively around the system starting at the primary liabilities. As a given edge's liabilities increase in one iteration, its tail node makes further calls on its reinsurers in the next iteration, increasing the liabilities on their edges. The number of fixed point iterations goes to infinity as the damping goes to 0. This is because all excess loss will end up with the damping node, but we require arbitrarily many trips through the cycle to reach equilibrium since the mass removed from the recirculation in each iteration is smaller with a smaller damping.

\begin{algorithm}[H]
	\algorithmicrequire $d$, $c$, $\gamma$, $s$, $X$
	\begin{algorithmic}
		\State Let $\ell_0$ be the zero vector, $t\leftarrow 1$, $\tt{finish}\leftarrow \tt{False}$
		\While {$\tt{finish} = \tt{False}$}
		\State Let $b_t$ indicate the entries that satisfy $X \ell_{t-1} + s - d \geq 0$; define $B_t = \text{diag}(b_t)$
		\State $\ell_t \leftarrow \text{min} \Big(B_t \gamma (X \ell_{t-1} + s - d), c \Big)$ element-wise
		\If {$\ell_{t} = \ell_{t-1}$}
		\State $\tt{finish} \leftarrow \tt{True}$
		\EndIf
		\State $t \leftarrow t + 1$
		\EndWhile
		\State \Return $\ell_t$
	\end{algorithmic}
	\caption{Fixed point iteration algorithm to determine reinsurance network liabilities} \label{alg:push}
\end{algorithm}

\subsubsection{Iterative linear solver: no caps case}

This motivates Algorithm~\ref{alg:deductibles_iter}, which calculates liabilities in polynomial time complexity by iteratively solving linear systems in networks without caps. The linear systems are of the form $\ell = \gamma B(s + X\ell -d)$ which has a unique solution
$$\ell = (I-\gamma BX)^{-1} \gamma B(s-d)$$
if $(I-\gamma BX)$ is nonsingular. In the case of zero deductibles (i.e., proportional contracts), costs are shared linearly according to coinsurance rates between firms, and only one iteration is required.

\begin{algorithm}[H]
	\algorithmicrequire $d$, $\gamma$, $s$, $X$
	\begin{algorithmic}
		\State Let $B_0$ be the 0 matrix and $t=1$
		\State Define diagonal matrix $B_1$ by setting $B_{1,ii} = 1$ if $(s-d)_i \geq 0$ and 0 otherwise
		\While {$B_t \neq B_{t-1}$}
		\State Solve for $\ell_t = \gamma B_t (s + X\ell_t - d)$
		\State $t \leftarrow t + 1$
		\State Define diagonal matrix $B_t$ by setting $B_{t,ii} = 1$ if $(s + X\ell_t - d)_i \geq 0$ and 0 otherwise
		\EndWhile
		\State \Return $\ell_t$
	\end{algorithmic}
	\caption{Determine reinsurance network liabilities in a system without caps} \label{alg:deductibles_iter}
\end{algorithm}

\begin{proposition} \label{prop:no_caps_iter_converge}
	Given a financial system $(X, \gamma, d, s)$ with infinite caps, if the spectral radius $\rho(\gamma X) < 1$, then Algorithm~\ref{alg:deductibles_iter} converges to the solution in at most $O(m^4)$ time.
\end{proposition}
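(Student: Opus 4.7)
The plan is to reduce convergence to a monotonicity argument that bounds the number of outer iterations by $m$, since each non-terminal iteration must strictly add at least one entry to the diagonal indicator $B_t$. Each inner step requires solving the $m \times m$ linear system $(I - \gamma B_t X)\ell_t = \gamma B_t(s-d)$. This system is well-posed because $0 \leq \gamma B_t X \leq \gamma X$ entry-wise, so by Perron--Frobenius $\rho(\gamma B_t X) \leq \rho(\gamma X) < 1$, and the Neumann series expresses $(I-\gamma B_t X)^{-1}$ as a nonnegative matrix. Gaussian elimination solves each system in $O(m^3)$ time, so the $O(m^4)$ bound reduces to showing the outer loop executes $O(m)$ times.

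To bound the iteration count I would prove by induction that $\ell_t$ and $B_t$ are both monotone nondecreasing. For the base case, $\ell_1 = (I-\gamma B_1 X)^{-1}\gamma B_1(s-d) \geq 0 = \ell_0$, because $B_1$ zeros out exactly the coordinates where $s-d$ is negative and the Neumann inverse is nonnegative. For the inductive step, assuming $\ell_{t-1} \leq \ell_t$, the threshold definition of $B$ together with $X \geq 0$ gives $B_t \leq B_{t+1}$ coordinate-wise. Then introduce the monotone affine contraction $\Phi_{B_{t+1}}(\ell) := \gamma B_{t+1}(s + X\ell - d)$, whose unique fixed point is $\ell_{t+1}$. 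A short case split on each coordinate (using that $B_{t+1} \geq B_t$ precludes the case $B_{t+1,ii} = 0 < B_{t,ii} = 1$, and that on newly activated coordinates $\ell_{t,i} = 0$ while $(s + X\ell_t - d)_i \geq 0$) shows $\Phi_{B_{t+1}}(\ell_t) \geq \ell_t$ entry-wise. Since $\Phi_{B_{t+1}}$ is monotone and a contraction, iterating it from $\ell_t$ produces an increasing sequence whose limit is $\ell_{t+1}$, so $\ell_{t+1} \geq \ell_t$, closing the induction.

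Because $B_t$ is monotone nondecreasing in the finite 0-1 diagonal lattice, and the loop exits precisely when $B_t = B_{t-1}$, every non-terminal iteration flips at least one diagonal entry from 0 to 1. Hence the loop terminates after at most $m+1$ iterations, giving the $O(m^4)$ total running time. At termination, the condition $B_t = B_{t-1}$ means $B_{t-1}$ is consistent with $\ell_{t-1}$, i.e.\ $B(\ell_{t-1}) = B_{t-1}$, and combined with $\ell_{t-1} = \gamma B_{t-1}(s + X\ell_{t-1} - d)$ this gives $\ell_{t-1} = \Phi(\ell_{t-1})$. By Corollary~\ref{cor:unique}, this is the unique fixed point, so the returned vector is the solution.

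The main obstacle is the monotonicity step, specifically the verification that $\Phi_{B_{t+1}}(\ell_t) \geq \ell_t$. The delicate case is at coordinates where $B_{ii}$ switches from $0$ to $1$ between steps: the prior linear solve pinned $\ell_{t,i}$ at zero, while the newly activated row of $\Phi_{B_{t+1}}$ can make the corresponding output strictly positive. This asymmetry, enabled by $B$ being a step function, is exactly what drives strict growth of $B_t$ and so the termination bound; the argument would fail if the inductive hypothesis did not simultaneously track both sequences.
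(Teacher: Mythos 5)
Your proof is correct and follows essentially the same approach as the paper: nonsingularity of $(I-\gamma B_t X)$ via $\rho(\gamma B_t X)\le\rho(\gamma X)<1$ and the Neumann series, monotonicity of the activation matrices $B_t$ bounding the iteration count by $m$, and $O(m^3)$ per linear solve giving $O(m^4)$ overall. The main difference is in rigor rather than substance: the paper's monotonicity claim is asserted informally ("activation of edges can only increase the number of other edges that become activated"), whereas you prove it by a clean joint induction on $(\ell_t, B_t)$, using the auxiliary affine contraction $\Phi_{B_{t+1}}$ and the coordinate-wise case split to establish $\ell_{t+1}\ge\ell_t$ from $\Phi_{B_{t+1}}(\ell_t)\ge\ell_t$; you also close the loop with uniqueness from Corollary~\ref{cor:unique} rather than the paper's appeal to reaching the "true" activation set $\hat B$, which is a slightly cleaner way to verify that the terminal vector is the solution.
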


See \hyperlink{pf:no_caps_iter_converge}{proof in the Appendix}.

Note that Algorithm~\ref{alg:deductibles_iter} works in some additional cases (still assuming infinite caps). In the first case, we may have problematic graph structure as demonstrated in the previous section, but which is not in a region of the graph that is activated by the given shock. This means that the algorithm never leaves the contractive region of $\Phi$. In the second case, $(I-\gamma BX)$ can be invertible even if $\rho(\gamma BX) \geq 1$. Note that in this latter case, $(I-\gamma BX)^{-1}$ will not be nonnegative, but this is not an issue as $\ell = (I-\gamma BX)^{-1}\gamma B(s-d)$ will still be nonnegative as required.

\subsubsection{Iterative linear solver: caps case}

We now adapt the iterative linear solver to the setting with contract caps. However, this is complicated by the fact that an iteration from 0 could mistakenly activate edges due to overcapacity leakage along edges in one of the linear solves. To avoid this, we need to start from the worst case and iterate downward, which results in a process that terminates at the maximum fixed point. Thus, this process only converges to the least fixed point if there is a unique fixed point.

The linear systems that come up in iterations are of the form $\tilde \ell = \tilde \gamma \tilde B(\tilde X \tilde \ell + \tilde v)$ (recall that tilde notation incorporates the $\Psi$ transformation onto the subsystem of edges that are not overcapacity in the previous iteration), which has a unique solution
$$\tilde \ell = (I-\tilde \gamma \tilde B \tilde X)^{-1} \tilde \gamma \tilde B \tilde v$$
if $(I-\tilde \gamma \tilde B \tilde X)$ is nonsingular.

Algorithm~\ref{alg:deductibles_caps_iter} describes the iterative linear solver.

\begin{algorithm}[H]
	\algorithmicrequire $d$, $c$, $\gamma$, $s$, $X$
	\begin{algorithmic}
		\State Let $b_0$ and $c_0$ be all twos vectors ($>$ all ones vectors) and $t=1$
		\State Let $b_1$ be the all ones vector and $c_1$ indicate entries with finite caps
		\While {$b_t \neq b_{t-1}$ and $c_t \neq c_{t-1}$}
		\State Let $\bar \ell = \text{diag}(c_t) c$
		\State Let $\Psi$ map to a system on the zero coordinates of $c_t$
		\State For $i$ a $c_t$ zero coordinate index, let $\psi(i)$ give the corresponding coordinate index under $\Psi$.
		\State Let $\tilde \gamma = \Psi \gamma \Psi^T, \tilde B_t = \Psi B_t \Psi^T, \tilde X = \Psi X \Psi^T, \tilde v = \Psi(s+X\bar \ell_1 -d)$
		\State Solve for $\tilde \ell = \tilde \gamma \tilde B_t(\tilde X \tilde \ell + \tilde v)$
		\State Let $t \leftarrow t + 1$
		\State $\ell \leftarrow \Psi^T \tilde \ell + \bar \ell$
		\State Let $b_t$ indicate the entries that satisfy $X \ell + s - d \geq 0$
		\State Let $c_t$ indicate the entries that satisfy $\gamma(X \ell + s - d) \geq c$
		\EndWhile
		\State \Return $\ell$
	\end{algorithmic}
	\caption{Determine reinsurance network liabilities in a system with deductibles and caps} \label{alg:deductibles_caps_iter}
\end{algorithm}

To prove that Algorithm~\ref{alg:deductibles_caps_iter} converges, we need a stronger condition than in Theorem~\ref{prop:unique}. This is because we have to start at an upper bound that is easily computable. In some cases, there may be a suitable upper bound, but unlike in Algorithm~\ref{alg:deductibles_caps_iter}, it is not immediately clear what it is. The following proposition provides sufficient conditions for convergence.


\begin{proposition}\label{prop:caps_iter_converge}
	Let $(X,\gamma,d,s,c)$ be a financial system and
	$\Omega := \wedge_{(B,C)\in \mathcal{K}} (I-C)\gamma BX(I-C)$ be the matrix element-wise maximum, and $\Psi_0$ be the map to a system on the edges with infinite caps (or map to the zero matrix if all caps are finite). Then if $\rho(\Omega)<1$ and $\rho(\Psi_0 \gamma X \Psi_0^T)<1$, Algorithm~\ref{alg:deductibles_caps_iter} converges to the solution in at most $O(m^4)$ time.
\end{proposition}

See \hyperlink{pf:caps_iter_converge}{proof in the Appendix}.

Note that the conditions of the proposition may be difficult to check in general. An easier condition to check is that $\rho(\gamma X)<1$; however, this is again not general enough to include many real world cases of multiple layers of XL contracts.

This algorithm additionally `works' when $\rho(\tilde \gamma \tilde B \tilde X) < 1$ for all $(B,C)$ pairs. However, we only know that it terminates at the minimum fixed point (i.e., a unique fixed point) when $\rho(\Omega)<1$.

\section{Real World Implications of the Network Model}

In the previous sections, we developed the machinery of our reinsurance network model. This model works sequentially by calculating liabilities given a shock, applying the collateralized portion of reinsurance contracts to cover liabilities, and then calculating clearing payments in the network. We now discuss two features that result from this model: dangerous network structures and parameter sensitivity. These features introduce new issues in risk management, present a new incentive to combat fraud, and demonstrate the importance of global contract design to ensure the insurance system works well.

\subsection{Dangerous network structures cause reinsurance spirals}

Relaxations of 100\% cycles cause counterintuitive nonlinear behavior known as reinsurance spirals. By relaxation, we mean that the cycle circulates close to (but $\leq$) 100\% reinsurance. We will refer to this as `\textbf{relaxed cycles}'. This is introduced, for instance, in \citep{bain99} with an example in the Lloyd's and London reinsurance markets in the 1980s. In these spirals, nominal liabilities increase at each step through the graph cycle until one of the contract edges reaches its cap, after which all excess liability is left with the reinsured party of that contract. Figure~\ref{fig:spiral} provides an example. In this example, even though the size of the shock is less than all contract caps, the spiral effect causes the cap on $(C,A)$ to be reached, leaving all liability for the shock on $A$. Given local first-degree information, all parties think they are adequately reinsured; however, it turns out that this is not the case. Relaxation of the 100\% cycle to smaller values of $\gamma$ lessen the growth of liabilities in the cycle, but lead to a similar effect, where a disproportionate amount of excess liability is left with a single party. Further, the effect is the same, even if we arbitrarily scale the caps around the cycle. Even if caps are very high, in which case a firm would intuitively expect to be very well reinsured, firms are still subject to the exact same spiral risk.

\begin{figure}
\centering \includegraphics[width=6cm]{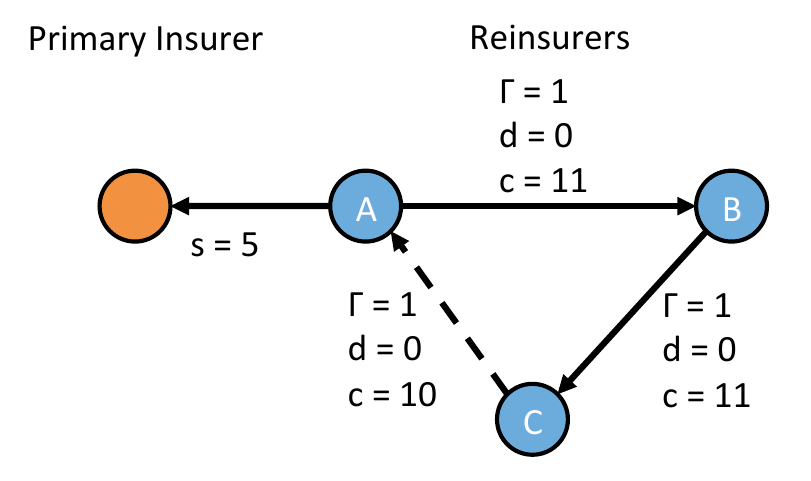}
\caption{Example of a reinsurance spiral. An initial shock of 5 is passed around the reinsurance cycle with 0 deductibles until nominal liabilities reach 10, at which point the cap on edge $(C,A)$ is activated. All excess loss is then left with firm A.}\label{fig:spiral}
\end{figure}

Another type of spiral that can happen is when a relaxed cycle is damped by a node outside the cycle. See Figure~\ref{fig:damped_cycle} for an example. In this case, in each trip through the cycle, a small proportion of excess liability is siphoned off by the damping reinsurer. The remaining proportion continues around the cycle until a proportion of it is again siphoned off by the damper. In equilibrium, a disproportionate amount of the excess loss can be left on the damping reinsurer, depending on the cap parameters in the cycle. The damping reinsurer may not be aware of the role they are taking in the network; given local first-degree information, they may think they are only reinsuring one firm instead of the whole cycle.

If there are multiple damping reinsurers, it will be difficult to predict whether one of the dampers (and which one) will be left with disproportionate excess liability if there is imperfect information about network parameters. For instance, one damping contract could have a low cap, leaving most liability on the second damping contract. Alternatively, a contract cap within the cycle could be activated and leave most excess loss on one of the reinsurers in the cycle. We can also have a damping chain of reinsurers. In this case, a node can be arbitrarily far from a relaxed cycle in a connected graph but still be left with disproportionate excess loss.

\begin{figure}
\centering \includegraphics[width=9cm]{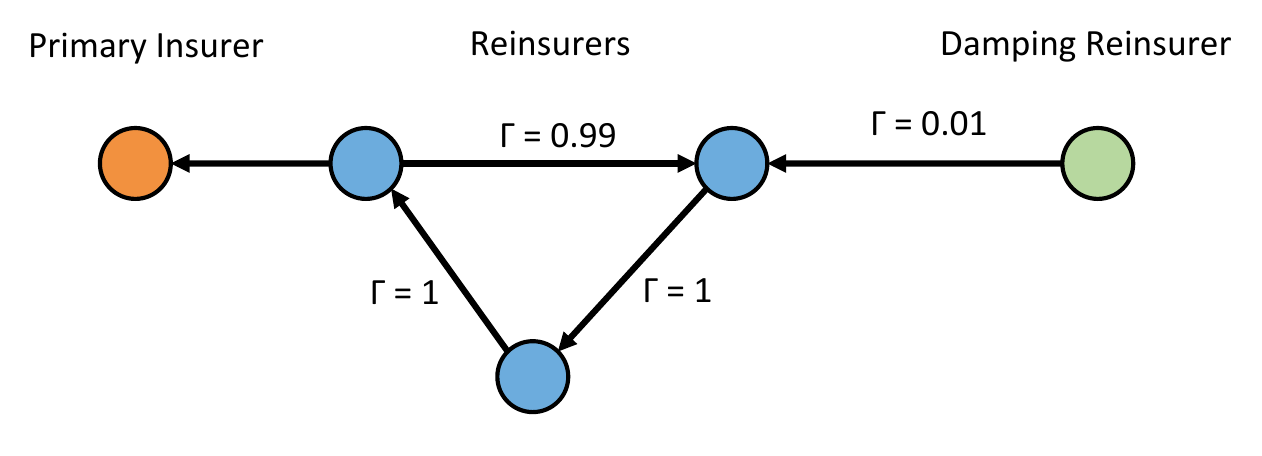}
\caption{Example of a relaxed cycle with a damping reinsurer. Excess loss is passed around the cycle with 1\% being absorbed by the damping reinsurer in each circulation. If the caps on the cycle contracts are high, disproportionate excess loss is left on the damping reinsurer.}\label{fig:damped_cycle}
\end{figure}

\textbf{Standard risk management does not work for relaxed cycles} Relaxed cycles can serve to aggregate losses from multiple sources across the network in a way that is not transparent to a damping reinsurer who only knows its local structure in the network. Figure~\ref{fig:damped_cycle3} is an example where losses from multiple primary insurers are aggregated through the relaxed cycle, leaving all excess liability with the damping reinsurer. For comparison, a tree structure such as in Figure~\ref{fig:tree} can aggregate losses from many primary insurers onto one reinsurer (the root node). However, reinsurers can control for this tree aggregation risk by putting limits on the size of the reinsurered portfolios (usually in terms of premiums the reinsured firm receives). The reinsured portfolios would have to be large to lead to large aggregations of losses. This risk management method does not work in the case of relaxed cycles. Even if firms in a relaxed cycle individually have small reinsurance portfolios, the relaxed cycle can include a large number of firms and the spiraling behavior can aggregate losses from all of these portfolios.

\begin{figure}
	\centering
	\begin{subfigure}[b]{0.58\textwidth}
		\includegraphics[width=\textwidth]{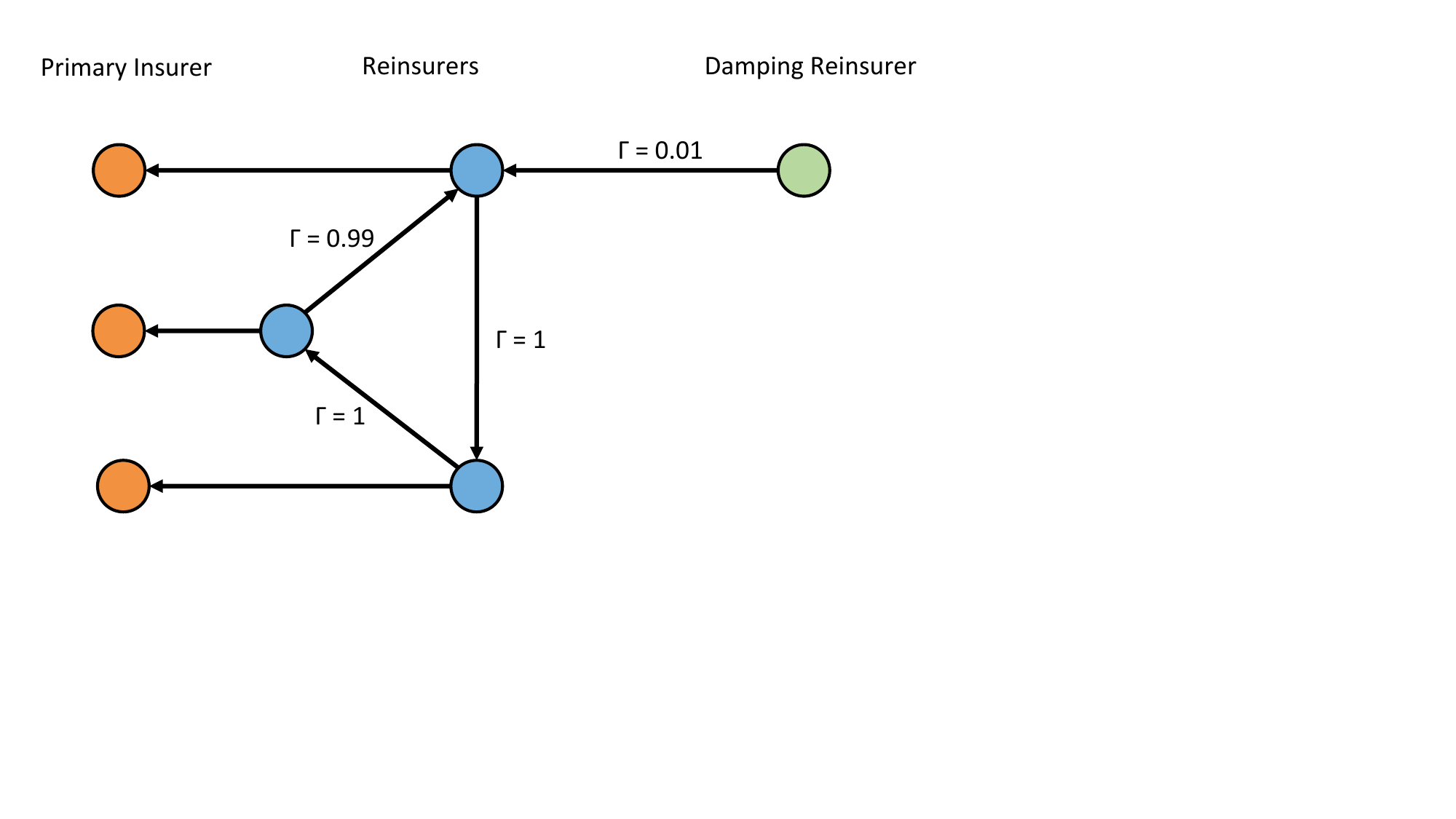}
		\caption{Example of cycle that aggregates multiple losses. With \\ high caps, most aggregated loss is absorbed by damper.}\label{fig:damped_cycle3}
	\end{subfigure}
	\begin{subfigure}[b]{0.4\textwidth}
		\includegraphics[width=\textwidth]{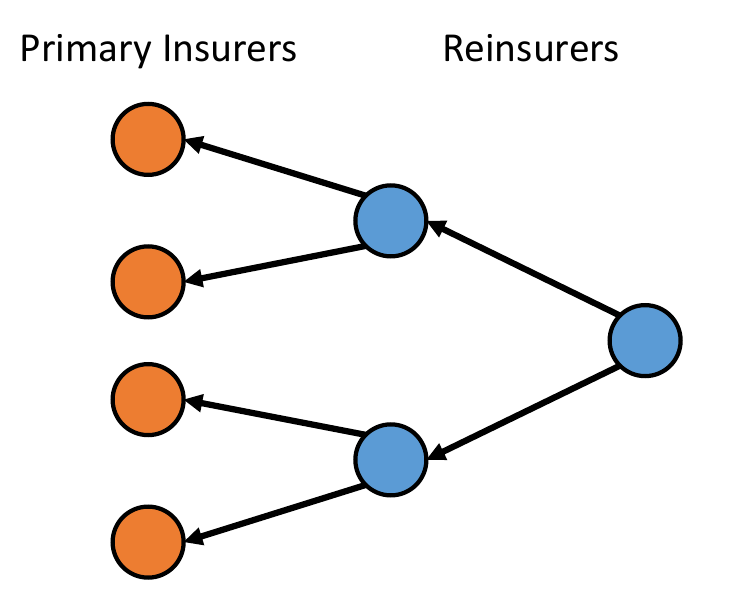}
		\caption{A tree structure that aggregates many losses on one reinsurer.}\label{fig:tree}
	\end{subfigure}
	\caption{Standard risk management works for trees but fails in the case of relaxed cycles.}
\end{figure}



Anther example of unintuitive behavior is that a contract cap does not necessarily limit the liability of a node caused by another node. Figure~\ref{fig:cap_negated} gives an example where graph connections through a second layer of reinsurance counteract the contract cap on the first layer. In this case, $B$ reinsures $A$ up to the cap, after which additional excess loss is translated back to $B$ through a second layer of reinsurance from $C$. Of course, deductibles may reduce the total loss borne by $B$. In a real application, $B$ likely knows little of the network structure outside of first degree connections and so may be unaware that it is also liable for parts of the second layer of reinsurance coverage of $A$.

\begin{figure}
\centering \includegraphics[width=6cm]{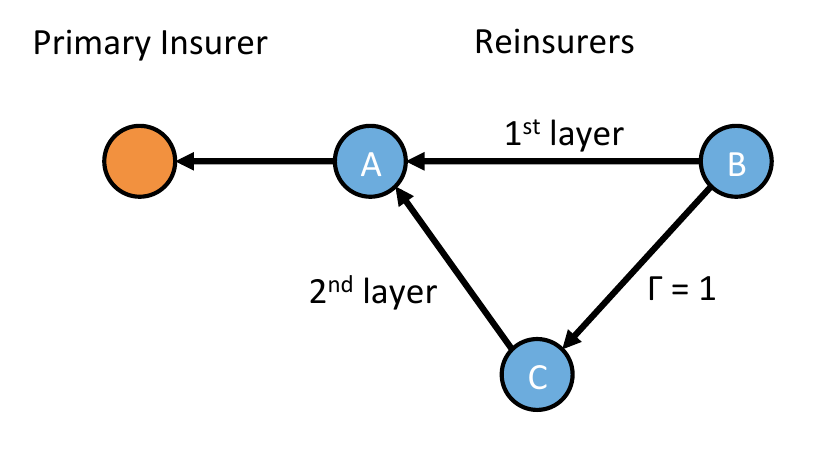}
\caption{Example in which a contract cap is counteracted. If the cap on $B$'s coverage of $A$ is passed, $B$ is still liable for additional coverage of $A$ through $C$'s second layer coverage of $B$.}\label{fig:cap_negated}
\end{figure}


We have demonstrated the emergence of reinsurance spirals and the extreme bearing of reinsurance losses due to network structure using simple examples. While these examples are illustrative, more complicated examples, as in real world reinsurance networks, can exhibit the same effects.

\subsection{Extreme parameter sensitivity}

\paragraph{Incomplete network information.} In a real setting, small groups of firms have incomplete information about the global network structure. They face intrinsic uncertainty of global contract parties and parameters. Indeed, in many cases, the network is unable to be fully observed in principle, even by regulators or the industry as a whole. We have previously mentioned that the parameters of some real industry contracts are not even agreed upon beforehand. Instead, the `contract' is really just an agreement to in the future agree on a contract, and so the parties of the contract do not even know the actual terms of the contract.

\paragraph{Difficulties in measuring risks.}
As a result of this uncertainty, there is also high uncertainty about which dangerous structures can emerge. For example, it is difficult to determine if, by taking a given contract position, a firm exposes itself to being a damping node. Thus firms face high uncertainty about their extreme bearing of reinsurance losses.\footnote{In the unlikely event that nodes have complete information about the network, identifying problematic structures would remain difficult from an algorithmic perspective, as we discussed in the previous sections. As we also noted, we can determine whether a 100\% cycle occurs in a given network using the spectral radius, but this criterion is in general not guaranteed to work for relaxed cycles.}

Additionally, even small perturbations in the network parameters can lead to large differences in losses and where extreme losses are borne, presenting additional complexities to risk management. To illustrate this, consider again the example in Figure~\ref{fig:spiral}. Small changes to the contract caps (e.g., switching which contract has a cap of 10 vs. 11) affect which edge cap is met and, in turn, who bears all excess loss. In the real world, uncertainty around network parameters is likely to be large, which only exacerbates the problem. We demonstrate this high sensitivity to parameters using real network data in the next section.

\paragraph{A new incentive to cooperate on fraud prevention.}
Given the complexity of real world networks, fraud can be quite difficult to uncover. In principle, to fully protect against fraud, a system needs to allow all parties to verify that upstream liabilities are valid propagations of primary insurance claims. Ordinarily, firms only have access to audit the direct claims they receive and must trust all other firms to properly audit their claims and to not collude in this regard. In the industry today, there is typically no good way to detect fraud. Indeed, there is a sense that reinsurers only work with insurers they trust.\footnote{Aside from unequivocal fraud, there are commonly grey instances in which a reinsurer will pay a little more on a contract if the insurer is a `good customer', which can affect claims down the chain. Source: private conversations with an insurance executive.}
	
Due to the parameter sensitivity of these systems, even very small fraud can have outsized effects on the equilibrium. Further, a fraudulent ordering of payments could also affect the equilibrium after taking into account clearing.  As individual firms cannot fully observe the network, they face wild variability in the valid claims they could face given a shock. Each firm should be highly suspect of whether the network of claims is correct, especially if they face high losses under the proposed claims. Even if fraud is very unlikely and only very small, a firm's potential benefit from exposing it could be quite large (e.g., the difference of the firm defaulting or not). Thus our sensitivity results suggest a very strong--and to our knowledge, previously undiscussed--incentive for a large majority of firms to band together to combat fraud, and potentially share information with each other in order to do so.


\subsection{Implications for contract design}

From a global contract design perspective, there are a few things that can help mitigate spirals and parameter sensitivity. We discuss contract deductibles and proportional contracts. However, these are not guaranteed to be effective.

Contract deductibles help to lower the excess loss that is recirculated through graph cycles by absorbing some loss before each edge activation. However, given a large shock, deductibles may not be enough to prevent disproportionate network effects from spirals, as was the case during the Lloyd's and London reinsurance market spiral during the 1980s.

The cap-deductible layering structure of XL reinsurance obfuscates risks and heightens sensitivity to parameters by adding lots of nonlinearities into the network. On top of this, we have to clear the network liabilities given the available equity, which adds additional nonlinearities. On the other hand, a system composed of proportional reinsurance contracts is much simpler to compute as liabilities can be calculated through a linear system. This removes many of the nonlinearities (but not all as we still need to clear the network), which helps make the risk faced by a firm in the network clearer. This lessens the chances that firms think they are adequately reinsured but later find out otherwise. This may also lead to less parameter sensitivity since, as pointed out in a previous section, the liabilities of a proportional system are determined by solving a single linear system. On an aggregate level, this may also lower systemic risk, which we examine using real network data in the next section.

\section{Simulations with Real Network Data}

In this section, we investigate two questions posed in the previous section using simulations on real reinsurance network data:
\begin{itemize}
\item Is there high parameter sensitivity in real reinsurance networks? I.e., is it difficult to estimate the risk faced by a firm in the network from a particular shock? We demonstrated that this is a theoretical issue in the previous section.

\item Are XL or proportional contract systems better from a systemic perspective? In the previous section, we demonstrated that a system based on XL contracts adds many nonlinearities to the system, which can serve to obfuscate risk and concentrate losses. A proportional system, on the other hand, has much less nonlinearities.
\end{itemize}

We also briefly explore the effect of time dependency of claims. Our code is freely available at \url{www.github.com/aklamun/reinsurance_networks}.

\subsection{Network construction}

As the basis for our simulations, we use real network data on property and casualty reinsurance from 2012 Schedule F Part 3, as obtained from the National Association of Insurance Commissioners (NAIC) \citep{naic_reins_data}. This data details premiums ceded to reinsurers by US insurance companies. Naturally, this data does not provide all contract parameters, so we estimate these using common rules of thumb in the insurance industry, which we back up with data where available.

We develop methods for constructing plausible networks of XL and proportional contracts consistent with the data. These methods and the more general simulation setup are detailed in Appendix~\ref{appendix:simulations}. In our simulations, we consider 1-in-100 and 1-in-250 year shocks to the network.

Figure~\ref{fig:network_graph} gives a visualization of a resulting XL network with edges weighted by $\gamma$. The figure shows a core-periphery structure. The core is composed of a central group of reinsurers and primary insurers who reinsure through most of them. This core-periphery structure is common in a variety of financial networks, see e.g. \citep{craig14}.

\begin{figure}
\centering
        \includegraphics[width=0.5\textwidth]{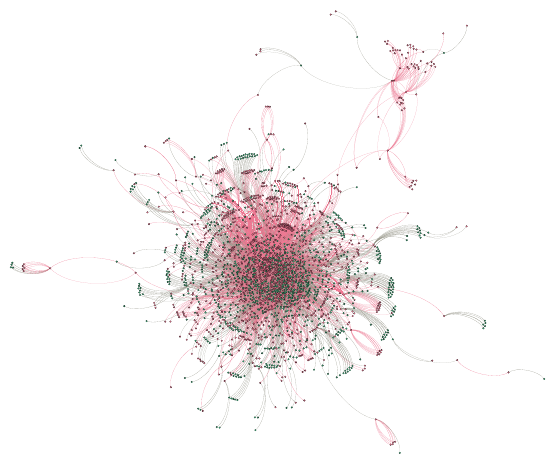}
        \caption{Network visualization. Pink nodes are reinsurers, green nodes are primary insurers.}\label{fig:network_graph}
\end{figure}

\subsection{Sensitivity to parameter perturbations}

In our first set of simulations, we examine the firm-level effects of perturbations in the XL financial network parameters. As discussed in the previous section, the nonlinearities added by layering structure make it difficult to evaluate the exposure of a firm to a shock under parameter uncertainty. Exposures under slightly different parameter sets could be completely different on a theoretical level. In these simulations, we demonstrate that sensitivity occurs in reasonable estimates of a real world reinsurance network.

In these simulations, we construct an XL network from our data, which is our base case for comparison. We then perturb the network parameters by a factor $\delta$ as described in Appendix~\ref{appendix:simulations}. For moderately small values of $\delta$, these perturbations are conservative because firms face a lot of uncertainty about how other firms' contracts are structured and, because of market forces, there are intrinsic uncertainties about each firm's capital available to pay out liabilities at the time of clearing. Additionally, for privately owned insurers, equity values are not publicly available.

With a given shock $sh$, in each simulation, we calculate a liabilities matrix $L$ and clear the liabilities using methods from \citep{eisenberg01}.\footnote{Note that the clearing in \citep{eisenberg01} assumes reinsurance contracts are on the same level of clearing importance as retrocession contracts. By using this clearing, we also assume that limited liability is always invoked between reinsurers. In reality, however, there are vague `parental guarantees' between companies in the same group as well as some degree of joint and several liability, in which regulators can force surviving insurance firms to take on liabilities of failed firms.} This second step is done without default costs for simplicity and outputs a clearing payment vector $p$, representing the total payment from each firm, and a default indicator vector. After the simulation, we calculate the vector of end equities $e_1$ as
$$e_1 = e_0 - p + L^T\alpha - sh,$$
where $e_0$ is the initial equity vector and $\alpha$ is defined component-wise as
$$\alpha_i = \begin{cases}
\frac{p_i}{(L\mathbf{1})_i}, & \text{ if } (L\mathbf{1})_i > 0 \\
0, & \text{ otherwise}
\end{cases},$$
where $\mathbf{1}$ is the all ones vector. The multiplicative equity return is then $e_1/e_0$. A return of 1 represents no loss, 0 represents complete loss of capital, and a negative value means that a primary insurer has outside liabilities that are unable to be covered after clearing. Note that, under this definition, reinsurers face a return floor of zero (i.e., $e_1\geq 0$ since $sh=0$ for a reinsurer); this makes sense because they have limited liability. In a legal sense, limited liability could be applied to primary insurers; however, it will be useful to us to explore the uncovered primary losses that are represented using our definition above for equity. Uncovered primary liabilities represent a failure of the system, as the purpose of the insurance-reinsurance industry is to provide protection on physical infrastructure. This does not happen if primary liabilities are not met.

Under a static 1-in-250 year shock, we run simulations with 2.5\%, 5\%, 10\%, and 20\% perturbations, each with 50 random samples. We examine how firm equity returns and defaults differ between the perturbed systems and the base system. Figure~\ref{fig:compare_equity_perturb} shows the extent to which these perturbations change firm equity returns. These histograms are over maximum differences observed in the 2609 network firms, excluding those with zero observed difference. Even under small 2.5\% perturbations, a firm's equity return can differ by 100 percentage points.\footnote{Note that this high uncertainty in equity return could also occur at smaller perturbation levels. In this study, we don't attempt to numerically find a lower bound.} The tail of the distribution fattens quickly as the perturbation magnitude increases. These results demonstrate that small uncertainties in financial network parameters can lead to wild differences in outputs, as demonstrated with firm equity returns.

\begin{figure}
    \centering
    \begin{subfigure}[b]{0.48\textwidth}
        \includegraphics[width=\textwidth]{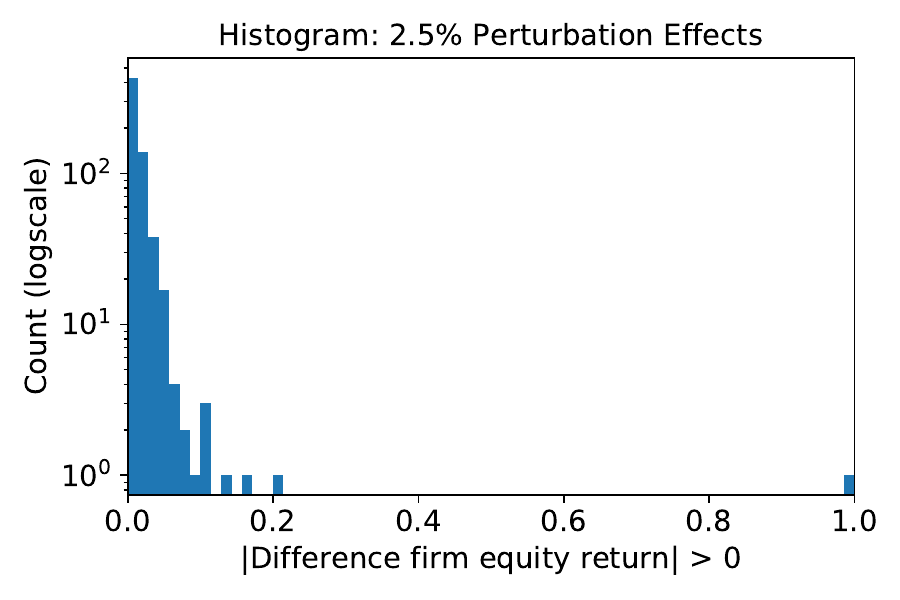}
        \caption{2.5\% perturbation}\label{fig:compare_equity_perturb_2.5}
    \end{subfigure}
    \begin{subfigure}[b]{0.48\textwidth}
        \includegraphics[width=\textwidth]{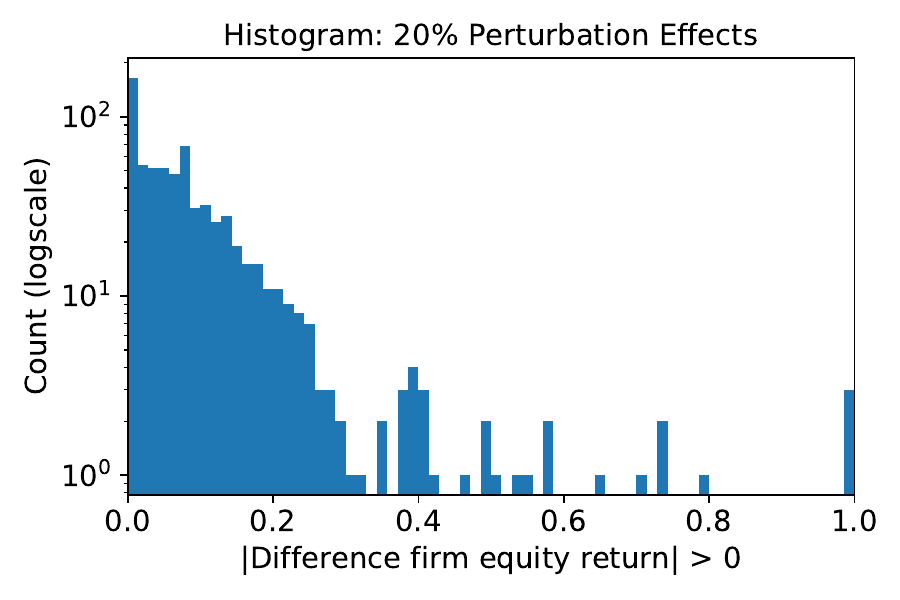}
        \caption{20\% perturbation}\label{fig:compare_equity_perturb_20}
    \end{subfigure}
\caption{Log-scale histograms of perturbation effects on firm equity returns, measured by maximum absolute value of change from the base case to the perturbed case, under a static shock.}\label{fig:compare_equity_perturb}
\end{figure}

These perturbations altered the default status of 8 firms (2.5\% perturbation) up to 21 firms (20\% perturbation). Additionally, these perturbations affected firm equity levels on the order of \$790M (2.5\% perturbation) up to \$110B (20\% perturbation). This demonstrates that perturbations can affect key players of the market. An individual firm could be wildly uncertain about the risks it faces from a given shock given even small network uncertainties.

\subsection{Systemic effects of contract structures}

In our second set of simulations, we examine the systemic effects from different contract structures. As discussed in the previous section, systems of XL contracts can have the effect of concentrating losses in unpredictable ways. This can cause firms to mistakenly think they are properly reinsured when in effect they are not. On the other hand, a system of proportional contracts can have much less obfuscation of risk. In these simulations, we demonstrate that, given real world network structure, proportional reinsurance systems are more stable in the face of tail risk than comparable XL reinsurance systems.

A key word here is of course `comparable'. Our methods for constructing comparable networks is described in Appendix~\ref{appendix:simulations}. Note that this comparison is limited because it is likely that in the real world the equilibrium graph structure, premiums ceded, and firm capital levels would be different between the different settings, whereas we are setting these constant. In the different settings, firms may make different decisions about these parameters. This said, our comparison is still useful because it shows that, given the same aggregate costs in terms of reinsurance premiums ceded and firm capital levels, the reinsurance market could perform systemically better.

We simulate 50 1-in-100 year shocks and 50 1-in-250 year shocks to two comparable XL and proportional reinsurance systems. We compare the aggregate number of defaults, the aggregate uncovered primary claims, and the distribution of firm equity returns for each network in each scenario.

Figure~\ref{fig:compare_agg} compares the number of defaults and the uncovered primary liabilities in each shock scenario. The number of defaults is a common measure of the resilience of financial networks. We argue that the level of uncovered primary liabilities is also a useful comparison because the purpose of the reinsurance industry in the first place is to redistribute risk such that primary insurance liabilities (insurance on real world infrastructure) can be more easily met during shocks. By both measures, the XL system performs consistently worse than the proportional system.

\begin{figure}
    \centering
    \begin{subfigure}[b]{0.48\textwidth}
        \includegraphics[width=\textwidth]{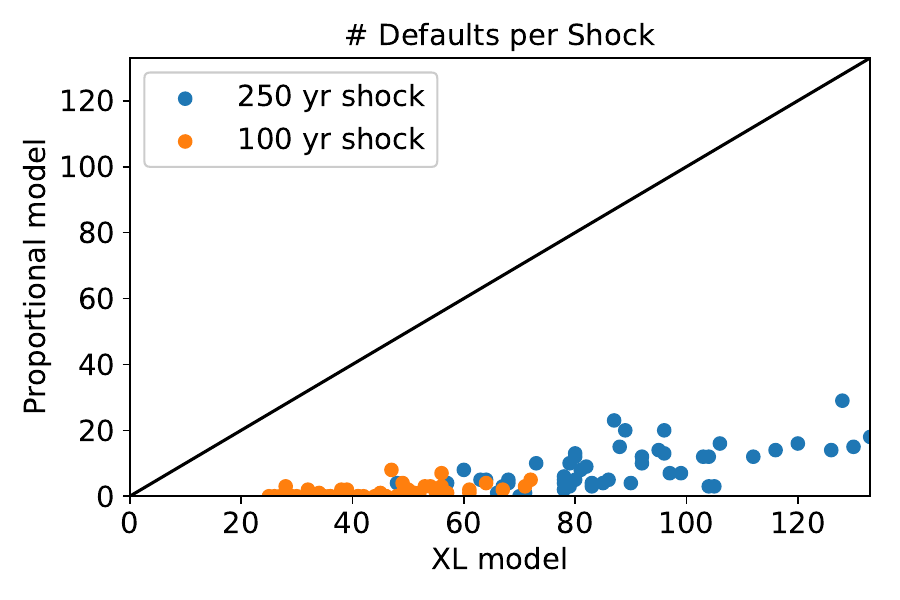}
        \caption{Number of defaults}\label{fig:compare_agg_defaults}
    \end{subfigure}
    \begin{subfigure}[b]{0.48\textwidth}
        \includegraphics[width=\textwidth]{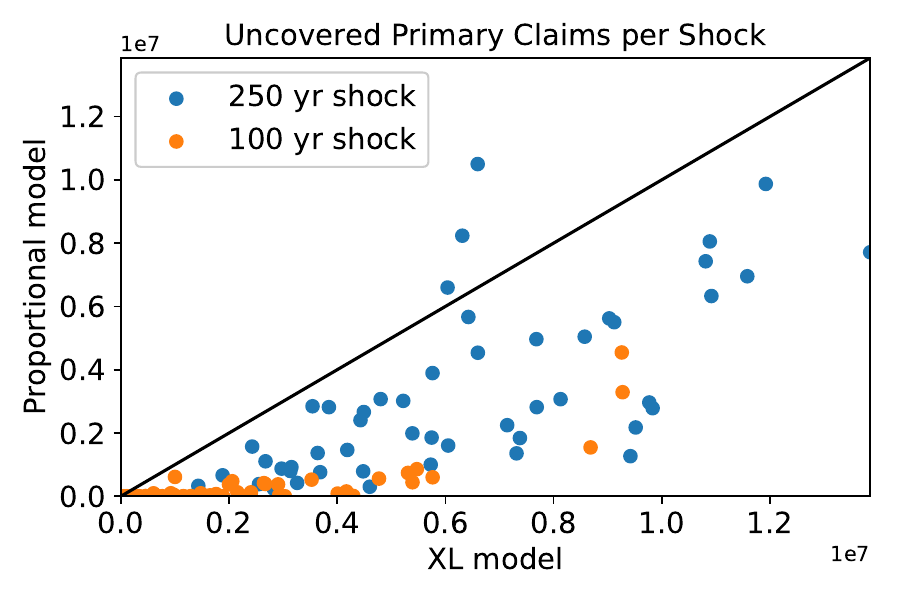}
        \caption{Total uncovered primary liabilities}\label{fig:compare_agg_uncoveredl}
    \end{subfigure}
    \caption{Aggregate comparison of proportional vs. XL systems. Each point is a shock realization.}\label{fig:compare_agg}
\end{figure}

Figure~\ref{fig:compare_equity_hists} shows histograms comparing firm equity returns between the XL and proportional systems under 1-in-100 year and 1-in-250 year shocks. These histograms collapse two (notably dependent) dimensions of data into one: the 2609 firms in the network and the 100 shock simulations. Thus each firm accounts for 100 data points. These histograms help to visualize the firm-level effects across the simulations.

Figures~\ref{fig:hist_firm_equity_xl} and \ref{fig:hist_firm_equity_prop} show empirical distributions of firm equity returns, which demonstrate that firms face higher tail risk of losses under the XL system than the proportional system. Note that the histogram spikes at 0 are caused by the limited liability of reinsurance companies. As discussed in the previous subsection, we could apply the same limited liability to primary insurers, but find it more useful to represent the uncovered primary liabilities within the distribution.

\begin{figure}
    \centering
    \begin{subfigure}[b]{0.48\textwidth}
        \includegraphics[width=\textwidth]{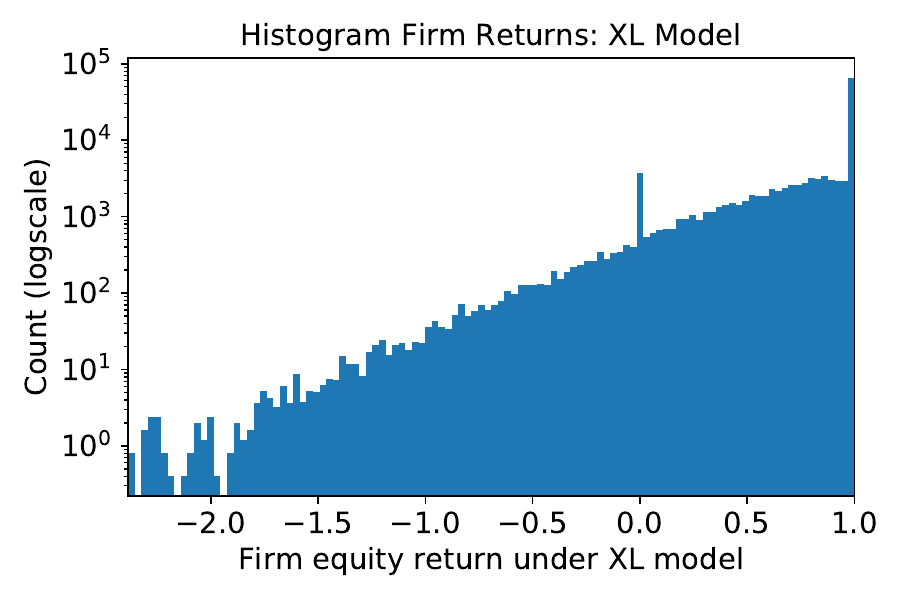}
        \caption{Firm equity returns under XL system}\label{fig:hist_firm_equity_xl}
    \end{subfigure}
    \begin{subfigure}[b]{0.48\textwidth}
        \includegraphics[width=\textwidth]{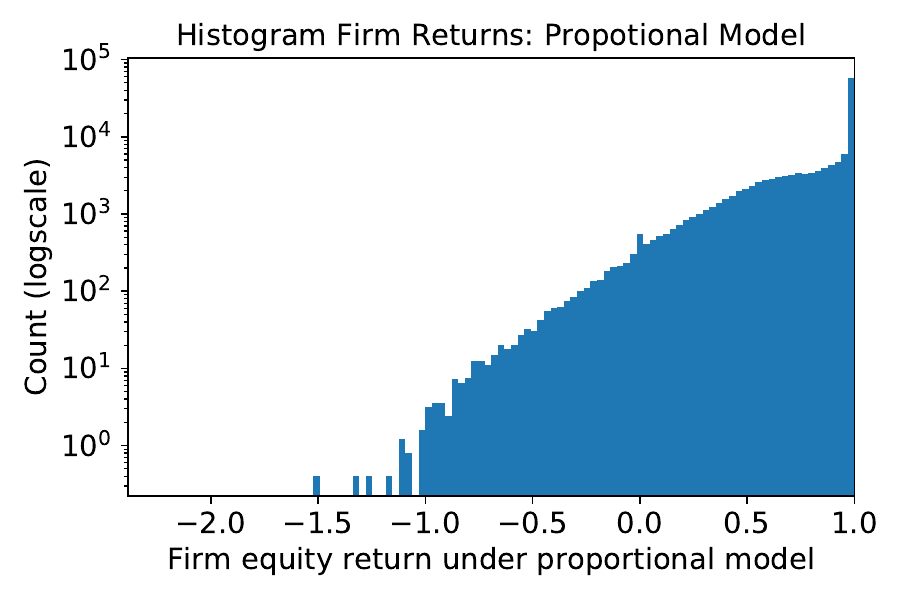}
        \caption{Firm equity returns under proportional system}\label{fig:hist_firm_equity_prop}
    \end{subfigure}
	\begin{subfigure}[b]{0.48\textwidth}
		\includegraphics[width=\textwidth]{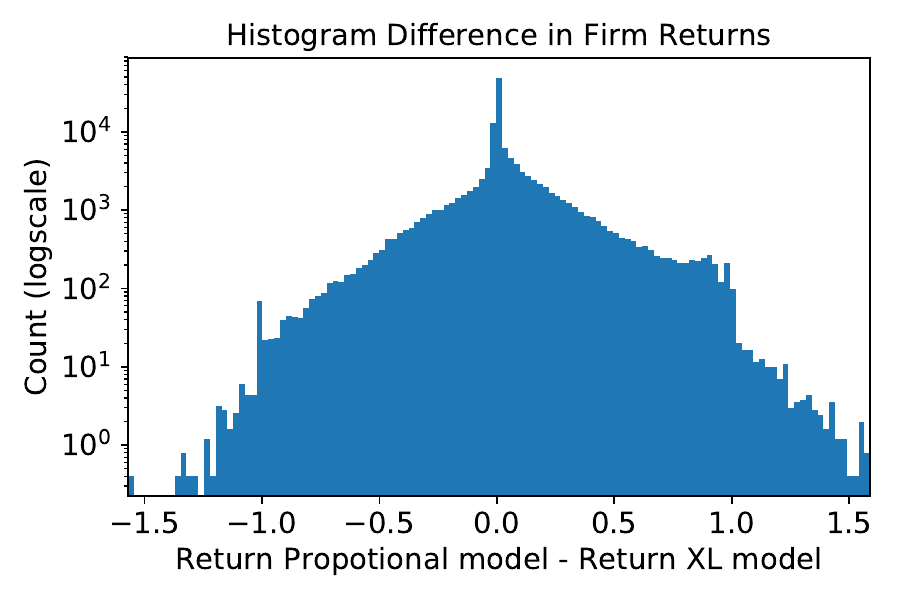}
		\caption{Difference in firm equity returns between models}\label{fig:hist_firm_equity_diff}
	\end{subfigure}
	\begin{subfigure}[b]{0.48\textwidth}
		\includegraphics[width=\textwidth]{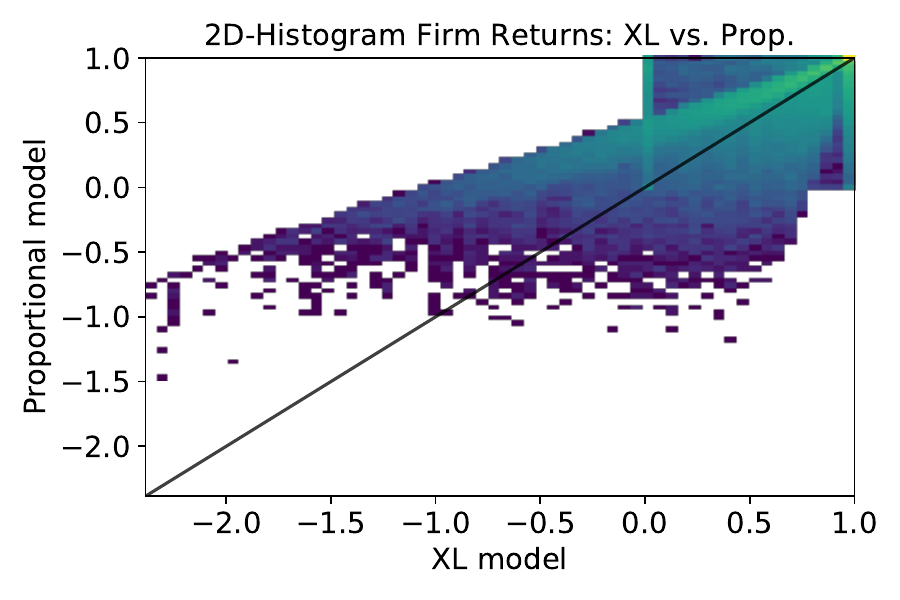}
		\caption{2D histogram returns in XL vs. proportional}\label{fig:histogram2d_compare_firm_equity}
	\end{subfigure}
    \caption{Firm equity returns under XL and proportional systems. Histograms include all simulations, weighted by relative probability of 1-in-100 vs. 1-in-250 year events (60\% vs. 40\%).}\label{fig:compare_equity_hists}
\end{figure}

 Figure~\ref{fig:hist_firm_equity_diff} shows the histogram of firm-level differences in returns between the two models, taking into account all scenarios. In these scenarios, $>49\%$ of firms are better off under the proportional model than the XL model. In particular, the additional cost of the proportional structure to the average firm is small. Consistent with the previous histograms, however, firms are predominantly better off in the tails under the proportional model than the XL model. Figure~\ref{fig:histogram2d_compare_firm_equity} shows a 2D histogram of firm equity return under the proportional model vs. the XL model. The same tail structure can be seen in the 2D histogram. Note that again we can see the limited liability effect of reinsurance companies in the square structure around 0 in the 2D histogram. Reinsurers' returns are constrained to the square between 1 and 0 in the 2D histogram due to the limited liability, whereas the primary insurers' returns form the triangle structure.

\subsection{Effects of time dependency of claims}

We now briefly explore the effects of claims that come in over multiple time periods. In reality, claims take varying time and up to several years to reach reinsurers. Two main factors contribute to this. First, claims are largely manually reported from one party to the next. This paperwork process can take considerable time to trickle through chains of reinsurers in the network. Reinsurers who are closer to the source see claims earlier. Second, some claims do not materialize until future years. Most reinsurance contracts are tied to property and casualty, in which losses commonly develop over 5-10 years. Loss run-offs beyond 10 years are common with major shocks like terrorist attacks and hurricanes. For instance, claims from the World Trade Center attacks were litigated over nearly a decade. Further, asbestos claims can come in decades after the fact.

We consider a simple setup in which claims come in over two time periods. After the first claims come in, the network liabilities are cleared and the firm capital is updated. The network is cleared again after the second claims come in. Such a setting, with multiple rounds of clearing underlies the  emerging literature on dynamic network models \cite{feinstein18b, kusnetsov19, capponi15}. We compare this to a single period setup in which all of the claims come in at the same time.

Formally, the network faces shocks $sh_1$ and $sh_2$ in periods one and two respectively. These yield network liabilities matrices $L_1$ and $L_2$ respectively. We apply clearing from \citep{eisenberg01} sequentially over the periods. In period one, liabilities $L_1$ are cleared and firm capital is updated as follows. Given starting equities $e_0$, shock $sh_1$, and liabilities $L_1$, the period one clearing payment vector $p_1$ is calculated. Then period one end equities are calculated as
$$e_1 = e_0 - p_1 + L_1^T\alpha_1 - sh_1,$$
where $\alpha_t$ is componentwise $\alpha_{t,i} = \begin{cases}
\frac{p_{t,i}}{(L_t\mathbf{1})_i}, & \text{ if } (L_t\mathbf{1})_i > 0 \\
0, & \text{ otherwise}
\end{cases}$ and $\mathbf{1}$ is the all ones vector.

Recall that as the shocks only directly affect the primary insurers, the minimum equity of a reinsurer is zero after the clearing. As before, we allow negative equities for primary insurers to account for uncovered primary losses. As primary insurers are simply leaves in the network, these do not affect the next period clearing.

In period two, remaining capital $e_1$ is used to clear liabilities $L_2$. In particular, if a reinsurer defaults in period one, any retrocession payments they receive in period two are channeled to the period two liabilities that triggered these payments. Given equities $e_1$, shock $sh_2$, and liabilities $L_2$, the period two clearing payment vector $p_2$ is calculated. Then period two end equities are calculated as
$$e_2 = e_1 - p_2 + L_2^T\alpha_2 - sh_2.$$
The multiplicative equity return across the two periods is then $e_2/e_0$.

The effect of the two period clearing is that insurers who are connected to defaulted reinsurers may have their claims paid in different fractions of face value depending on whether they are paid in the first or second period. This compares to a single period clearing, in which everyone is paid out in the same proportion. This means that earlier claims may effectively get seniority over later claims. A reinsurer may be able to pay out in full for early claims, but may enter default in the second period. However, the opposite can also happen. A reinsurer may be unable to pay in full on early claims, but is able to pay more if retrocession contracts are activated when the second wave of claims comes in. Notice, however, that a firm that defaulted in the first period will never have excess value in the second period because reinsurance coverage is $<100\%$.

We simulate 50 1-in-250 year shocks that are split between two periods. For each of the 50 simulations, we generate two random 250-year shocks distributed in the same way as described in Appendix~\ref{appendix:simulations} (i.e., not uniformly distributed, but proportional to the size of premiums). We divide the magnitude of each by two; these are the period one and period two shocks. We then compare this to the single aggregate shock.

Figure~\ref{fig:compare_equity_time_diff} shows the histogram of firm-level differences in returns between  two periods and one period clearing. The histogram collapses two (notably dependent) dimensions of data into one: the $2609$ firms in the network and the 50 shock simulations. Thus each firm accounts for $50$ data points. This histogram is useful for visualizing how much the time dependence of claims can affect firms' equities. Note that the average difference is very close to 0 (0.00032) although the distribution as visualized is asymmetric. This reflects that the structural change is not adding excess costs but rather changing the distribution of costs. To illustrate how much firm equities can change between the two-period and one-period clearing schemes, in every simulation, at least $5\%$ of firms saw $>22\%$ absolute change and at least $1\%$ of firms saw $>60\%$ absolute change.

\begin{figure}
	\centering
	\includegraphics[width=8cm]{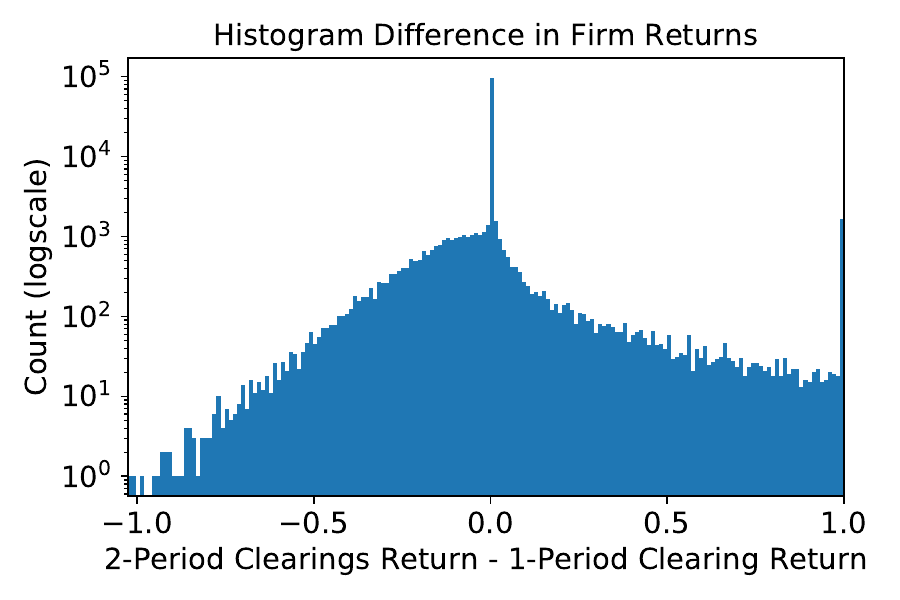}
	\caption{Difference in firm equity returns between two-period and one-period clearing.}\label{fig:compare_equity_time_diff}
\end{figure}





\section{Concluding Remarks}

Current reinsurance risk models do not capture network effects, which we show can be quite extreme. We have demonstrated that even if firms know the global network structure unreasonably well (i.e., with small uncertainty), they can be wildly uncertain about how losses will be distributed from a given shock. Tail risk from network structure should be taken into account in determining capital requirements, evaluating counterparty risk, and pricing reinsurance contracts. This exposes inherent inefficiencies in the current system.

\paragraph{Combating fraud.}
Our sensitivity results reveal a strong new incentive for firms to band together to combat fraud. This could be achieved through a trusted central party or, in the absence of such a party, a distributed ledger system. A blockchain could provide guarantees on fraud prevention if it is able to record audited and time-ordered claims and reinsurance payments. Such a system would need to allow participants in the network to independently verify the equilibrium liabilities and clearing state in the network. We propose future work to design such a system that works given the real world constraints around contract privacy. We note, however, that the strong incentive we have demonstrated may make firms more willing to share some data to contribute to fraud prevention, thus relaxing these contraints. Many organizations are making a concerted effort to incorporate blockchain systems into the (re)insurance industry. Our paper helps to inform them about the problems they should be addressing.

\paragraph{Measuring risk.}
We have revealed dangerous structures that lead to tail risk from network effects. We propose new tools that better measure and classify tail risk of positions (e.g., nodes or contracts) within the network. One approach is to use inner Monte Carlo simulations over a range of shocks, and outer Monte Carlo simulations over a range of parameters; however, convergence may be costly. A second approach is to use machine learning classification algorithms. This would entail generating a wide variety of graphs and parameters and evaluating losses from contagion shocks under different scenarios via algorithms from our paper. Using this as a training set, the aim is to detect graph structures that predict which nodes bear tail risk from spirals. We believe the structure that we have revealed about tail risk in this paper will aid in the construction of such methods. One promising result from our simulations is that, under a given shock, the losses of many network nodes appear robust to parameter error whereas others suffer more chaotic behavior. This suggests that classification algorithms may be successful in predicting which nodes bear high uncertainty and are therefore more susceptible to model error.

\paragraph{Designing better systems.}
Our simulations suggest that, for the same societal-level costs in terms of reinsurance premiums and capital locked in the reinsurance industry, the industry can be better structured to perform its social purpose more effectively during extreme events. We would like to extend this to a market design perspective. One issue is that, in isolation, firms can have an incentive to require caps on payouts (although this is not clear), but at the network level this does not appear optimal. We leave it to future work to explore the robust systemic design perspective taking into account how the $\gamma$ matrix changes with respect to changes in contract structure.

In recent years, catastrophe bonds held by nontraditional players, such as hedge funds, have become more popular in place of traditional reinsurance contracts. One advantage of these is that they could become additional dampers in the system as they absorb losses in the system without recirculating them. This has the tradeoff, however, of more significantly interconnecting the larger financial system, which can cause other potential exposures that are relevant from a market design perspective.

Lastly, stemming from our discussions with industry executives, we propose the following extensions to the model and analysis.
\begin{itemize}
	\item \textbf{Time dimension:} As discussed in the previous section, time dependence of claims can have a large effect on firms' equities. This warrants further work. We note that NAIC provides an extensive historical database on insurance loss run-offs in Schedule P.
	
	\item \textbf{Liquidity factor:} Extreme events in the insurance-reinsurance industry--such as high concentration of large losses due to network structure--could trigger a liquidity crisis from fire sales of risky assets. This can amplify losses within and beyond the reinsurance industry, propagating an insurance-specific event into a systemic crisis.
\end{itemize}


\bibliographystyle{apalike}

\small

\newpage

\normalsize
\appendix
\section{Proofs}\label{appendix:proofs}

\paragraph{Lemma~\ref{lemma:BC-constant_sets}} \hypertarget{pf:BC-constant_sets}{}
\begin{proof}
	For every $\ell \geq 0$, there is a unique corresponding $B$ and $C$ defined (and so also a unique $\Psi$). Note that on the boundary between $(B,C)$-constant sets multiple $B$s and $C$s could be defined equivalently. This is because $\Phi$ is an intersection of linear systems on the boundaries--the difference in possible $B$s and $C$s comes from edges that have exactly met their deductible or cap respectively but have no excess liability under $\ell$. In these cases, $B_{ii}$ or $C_{ii}$ respectively can be set equivalently to 1 or 0, but a unique selection is defined in the definition.
	
	The derivatives of $B(\ell)$ and $C(\ell)$ are defined and zero except at points of discontinuity since $B$ and $C$ only change at thresholds. $B$ and $C$ are defined such that their value on the boundary (i.e., at points of non-differentiability in either $B$ or $C$) are constant with a value on one side of the boundary. As there are $2^m$ possible $B$ and $C$ matrices (1 or 0 for each diagonal entry), the $(B,C)$-constant sets form a finite partition of $\{\ell \vert \ell \geq 0\}$ of size at most $2^{2^m}$.
	
	To establish convexity, notice that systems of linear inequalities define $B$ and $C$. And so the $(B,C)$-constant sets, each the intersection of the pre-image of a given $B$ and $C$ values, are convex sets since they are the intersections of half-spaces from each inequality.
\end{proof}

\noindent\rule{\textwidth}{1pt}

\paragraph{Theorem~\ref{prop:unique}} \hypertarget{pf:unique}{}
\begin{proof}
	Lemma~\ref{lemma:bertsekas} gives us that a linear system with matrix $\Omega$ is a contraction with respect to some weighted Euclidean norm. Let $\| \cdot \|_s$ be such a norm and let $\alpha\in [0,1)$ be the corresponding Lipschitz constant. Since our matrices are non-negative, the Perron-Frobenius theorem gives us that for $(B,C)\in \mathcal{K}$, $\rho\Big((I-C)\gamma BX(I-C)\Big) \leq \rho(\Omega) < 1$.
	
	Note that the derivatives of $B(\ell)$ and $C(\ell)$ are zero except at points of discontinuity. On the subsets of the domain space on which $(B,C)$ is constant ($(B,C)$-constant sets), $\Phi$ is a linear system described by $(I-C)\gamma BX(I-C)$. This can be written as
	$$\begin{aligned}
	\Phi(\ell) &= (I-C)\gamma B \Big( X(I-C)\ell + XCc + (I-C)(s-d)\Big) + Cc \\
	&= \Psi^T \tilde \gamma \tilde B (\tilde X \Psi \ell + \tilde v) + \bar \ell.
	\end{aligned}$$
	
	Let $\ell_1, \ell_2$ be points in a $(B,C)$-constant set. Then
	$$\begin{aligned}
	\|\Phi(\ell_1) - \Phi(\ell_2) \|_s &= \|\Psi^T \tilde \gamma \tilde B (\tilde X \Psi \ell_1 + \tilde v) + \bar \ell_1 - \Psi^T \tilde \gamma \tilde B (\tilde X \Psi \ell_2 + \tilde v) + \bar \ell_2\|_s \\
	&= \| \Psi^T \tilde \gamma \tilde B \tilde X \Psi(\ell_1-\ell_2)\|_s \\
	&\leq \|\Omega(\ell_1-\ell_2)\|_s \\
	&\leq \alpha \|\ell_1-\ell_2\|_s,
	\end{aligned}$$
	where the third line follows because $0 \leq \Psi^T \tilde \gamma \tilde B \tilde X \Psi \leq \Omega$ element-wise. 
	Thus $\Phi$ is a contraction with respect to $\|\cdot\|_s$ locally on each $(B,C)$-constant set.
	
	Note that for $\hat \ell$ on the boundary of a $(B_1,C_1)$-constant set and a $(B_2,C_2)$-constant set,
	$$\Psi_1^T \tilde \gamma_1 \tilde B_1 (\tilde X_1 \Psi_1 \ell + \tilde v_1) + C_1 c = \Psi_2^T \tilde \gamma_2 \tilde B_2 (\tilde X_2 \Psi \ell + \tilde v_2) + C_2 c$$
	since $\Phi$ is continuous. The explanation for this is that, on the boundary, multiple $B$s and $C$s could be defined equivalently. $\Phi$ is an intersection of linear systems on the boundaries. The difference in possible $B$s comes from edges that have exactly met their deductible but have no excess liability. The difference in possible $C$s comes from edges that have exactly met their cap. In these cases, $B_{ii}$ (respectively $C_{ii}$) can be set equivalently to 1 or 0. Hence, the contraction relation extends to the boundaries of $B$-constant sets.
	
	We next show that the contraction relation extends to the union of two adjacent $(B,C)$-constant sets. Choose $\ell_1 \in (B_1,C_1)$-constant set and $\ell_2 \in (B_2,C_2)$-constant set such the shortest path only crosses one $(B,C)$-constant boundary. Since $\|\cdot \|_s$ is a weighted Euclidean norm, there is a shortest path between $\ell_1$ and $\ell_2$ that crosses the boundary between $(B_1,C_1)$- and $(B_2,C_2)$-constant sets. Let $\hat \ell$ be the crossing point of this boundary. Then
	$$\begin{aligned}
	\| \Phi(\ell_1) - \Phi(\ell_2)\|_s &= \| \Phi(\ell_1) - \Phi(\hat \ell) + \Phi(\hat \ell) - \Phi(\ell_2)\|_s \\
	&= \| \Psi_1^T \tilde \gamma_1 \tilde B_1 \tilde X_1 \Psi_1 (\ell_1 -\hat \ell) + \Psi_2^T \tilde \gamma_2 \tilde B_2 \tilde X_2 \Psi_2 (\hat \ell - \ell_2)\|_s \\
	&\leq \| \Psi_1^T \tilde \gamma_1 \tilde B_1 \tilde X_1 \Psi_1 (\ell_1 -\hat \ell) \|_s + \| \Psi_2^T \tilde \gamma_2 \tilde B_2 \tilde X_2 \Psi_2 (\hat \ell - \ell_2)\|_s \\
	&\leq \|\Omega(\ell_1 - \hat \ell)\|_s + \|\Omega(\hat \ell - \ell_2)\|_s \\
	&\leq \alpha \|\ell_1 - \hat \ell\|_s + \alpha \| \hat \ell - \ell_2\|_s \\
	&= \alpha \|\ell_1 - \ell_2\|_s, 
	\end{aligned}$$
	where the second line follows since either $(B_1,C_1)$ or $(B_2,C_2)$ can be used in $\Phi$ along the boundary, the third line follows from the triangle inequality, the fifth line follows from the contraction relation on $(B,C)$-constant sets and their boundaries, and the sixth line follows since $\hat \ell$ is on the shortest path from $\ell_1$ to $\ell_2$.
	
	Next, consider the shortest path (a line) between any two points in the space $\{ \ell \vert \ell \geq 0\}$. As established by Lemma~\ref{lemma:BC-constant_sets}, the $(B,C)$-constant sets are convex, which means that a line cannot cross the boundary of any $(B,C)$-constant set more than twice. Thus, the shortest path between the points can only cross finitely many boundaries (at most $2\cdot 2^{2^m}$, or two for each possible $(B,C)$-constant set). Then, by induction on the number of $(B,C)$-constant sets along the shortest path, the contraction relation of $\Phi$ extends to the union of all $(B,C)$-constant sets, which is equivalently the whole space $\{ \ell \vert \ell \geq 0\}$ by Lemma~\ref{lemma:BC-constant_sets}.
	
	We now need to show that solutions are restricted to a compact set. Since $\rho(\Omega) < 1$, we can derive an upper bound for the solution by solving the dominating linear system $\Omega$ (which may or may not come from a feasible $(B,C)\in \mathcal{K}$), taking the maximum coordinate, and forming the hypercube in which coordinates are bounded by $0$ and this maximum coordinate. The Banach fixed point theorem then gives the result.
\end{proof}

\noindent\rule{\textwidth}{1pt}

\paragraph{Theorem~\ref{prop:kleene}} \hypertarget{pf:kleene}{}
We will first introduce the machinery behind the Kleene fixed point theorem following the exposition from \citep{baranga91} and then use it to prove Theorem~\ref{prop:kleene}.

Let $(P, \leq)$ be a partially ordered set, meaning the binary relation $\leq$ is reflexive, antisymmetric, and transitive.
\begin{itemize}
	\item $(P, \leq)$ is \textbf{$\omega$-complete} if every increasing (i.e., nondecreasing) sequence $\{x_n\}_{n \in \mathbb{N}}$ in $P$ has supremum in $P$.
	\item A function $f: P \rightarrow P$ is \textbf{$\omega$-continuous} if it preserves supremums of increasing sequences. I.e., for every increasing sequence $\{x_n\}_{n\in \mathbb{N}}$ in $P$ that has supremum in $P$, the sequence $\{f(x_n)\}_{n\in \mathbb{N}}$ also has supremum in $P$ and
	$$\lim_{n\rightarrow \infty} f(x_n) = f\left( \lim_{n\rightarrow \infty} x_n \right).$$
\end{itemize}
Notice that a $\omega$-continuous function is monotone increasing. This is a direct consequence of preserving suprema of all increasing sequences.

\begin{theorem}
	\textbf{(Kleene fixed point theorem)} Let $(P,\leq)$ be a $\omega$-complete partially ordered set and $f:P\rightarrow P$ be a $\omega$-continuous function. If there is $x\in P$ such that $x \leq f(x)$, then $\bar x~=~\sup \{ f^n(x) \vert n \in \mathbb{N}\}$ is the least fixed point of $f$ in $\{y\in P \vert y \geq x\}$.
\end{theorem}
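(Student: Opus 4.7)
The plan is to follow the standard constructive proof of Kleene's fixed point theorem, using the hypothesis $x \leq f(x)$ as the base of an ascending chain and leveraging $\omega$-continuity to pass to the supremum. First I would verify that the sequence $\{f^n(x)\}_{n \in \mathbb{N}}$ is increasing. Since the text notes that $\omega$-continuity implies monotonicity, I can argue by induction: the base case $x \leq f(x)$ is given, and if $f^{n-1}(x) \leq f^n(x)$ then applying the monotone $f$ yields $f^n(x) \leq f^{n+1}(x)$. Once this chain is in hand, $\omega$-completeness of $(P,\leq)$ guarantees that $\bar x := \sup\{f^n(x) \mid n \in \mathbb{N}\}$ exists in $P$.

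Next I would show $\bar x$ is a fixed point of $f$. Because the sequence $\{f^n(x)\}$ is increasing and has supremum $\bar x$, $\omega$-continuity gives
$$f(\bar x) \;=\; f\!\left(\lim_{n\to\infty} f^n(x)\right) \;=\; \lim_{n\to\infty} f(f^n(x)) \;=\; \lim_{n\to\infty} f^{n+1}(x) \;=\; \bar x,$$
where the last equality holds because $\{f^{n+1}(x)\}$ is a cofinal subsequence of $\{f^n(x)\}$ and therefore has the same supremum.

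Finally I would establish minimality of $\bar x$ within $\{y \in P \mid y \geq x\}$. Let $y$ be any fixed point of $f$ with $y \geq x$. Applying monotonicity of $f$ repeatedly to the inequality $x \leq y$ gives $f^n(x) \leq f^n(y) = y$ for every $n \in \mathbb{N}$ (using $f(y) = y$ in the second step and induction). Hence $y$ is an upper bound for $\{f^n(x)\}$, and so $\bar x = \sup\{f^n(x)\} \leq y$, proving $\bar x$ is the least fixed point in $\{y \in P \mid y \geq x\}$.

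The main obstacle is essentially bookkeeping rather than a deep step: one must be careful that the $\omega$-continuity hypothesis actually applies (the sequence must be increasing, which is why establishing monotonicity of $\{f^n(x)\}$ first is critical), and that $\omega$-continuity is invoked in the direction that produces $f(\bar x) = \bar x$ rather than merely an inequality. The proof never needs a global upper bound on $P$ (which is why Tarski's theorem is not available here), so the entire argument hinges on combining the ascending chain condition with the order-preservation of suprema.
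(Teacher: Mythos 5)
The paper does not prove this theorem itself; it states it as a known classical result, citing Baranga (1991) for the exposition, and then uses it as a black box to prove Theorem~\ref{prop:kleene}. Your argument is the standard constructive proof of Kleene's fixed point theorem and is correct: the ascending chain $\{f^n(x)\}$ is established by induction from $x \leq f(x)$ and the monotonicity implied by $\omega$-continuity; $\omega$-completeness yields the supremum $\bar x$; $\omega$-continuity transfers to $f(\bar x) = \sup f^{n+1}(x) = \bar x$ since the shifted sequence is cofinal; and minimality follows because any fixed point $y \geq x$ satisfies $f^n(x) \leq f^n(y) = y$ and so dominates the supremum. There is no gap, and nothing to compare against since the paper omits the proof by design.
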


Let $\bar{\mathbb{R}}$ be the completion of the real numbers with $\infty$. We will work in this extended space and draw our results back to the normal real space. We now prove Theorem~\ref{prop:kleene}.

\begin{proof}
	First notice $(\{\ell \in \bar{\mathbb{R}}^m \vert \ell \geq 0\}, \leq)$ is a $\omega$-complete partial ordering. Choose $x=0$ and note that $0 \leq \Phi(0)$. Notice that we are working in an extension of $\mathbb{R}^m$, and so we may find that the fixed point promised by the theorem is infinite. To address this, we have assumed that there is a (finite) fixed point on $\mathbb{R}^m$, and so the minimum fixed point must also be finite. It now remains to be shown that $\Phi$ is $\omega$-continuous.
	
	Take two sequences $x_n \uparrow \bar x$ and $y_n \uparrow \bar x$ in the partial ordering. We need to establish that $\lim_{n\rightarrow \infty} \Phi(x_n) = \lim_{n\rightarrow \infty} \Phi(y_n)$. This result is immediate if all coordinates of $\bar x$ are finite since $\Phi$ is continuous and monotone increasing. So suppose some coordinates of $\bar x$ are infinite. As we go along the process $\Phi(x_n)$, a finite number of edges and caps can be activated, after which activations stop. Thus there is a step $N$ after which $\Phi$ will be a linear map on the remaining $x_n$s. The same is true for some step $M$ for the sequence of $y_n$s. Then  for $n\geq \max(M,N)$, the $\Phi(x_n)$ and $\Phi(y_n)$ will lie on an increasing hyperplane, with $\Phi(x_n) = \Psi^T \tilde \gamma \tilde B (\tilde X \Psi x_n + \tilde v) + Cc$  and  $\Phi(y_n) = \Psi^T \tilde \gamma \tilde B (\tilde X\Psi y_n + \tilde v) + Cc$, for some $B,C,\Psi(C)$. Since $x_n \geq 0$ we then have 
	$$\begin{aligned}
	\lim_{n\rightarrow \infty} \Phi(x_n) &= \Psi^T  \tilde \gamma \tilde B \tilde X \Psi \lim_{n\rightarrow \infty}x_n + \Psi^T \tilde \gamma \tilde B \tilde v + Cc \\
	&= \Psi^T  \tilde \gamma \tilde B \tilde X \Psi \bar x + \Psi^T \tilde \gamma \tilde B \tilde v + Cc.
	\end{aligned}$$
	The last equality holds since $\bar x$ is the supremum of $x_n$ and thus lies on the same extended hyperplane. The same equality holds for the $y_n$ sequence.
	
	Now define $\Phi(\bar x) := \lim_{n\rightarrow \infty} \Phi(x_n)$ for any sequence $x_n \uparrow \bar x$. By the above, this is well-defined because the value is independent of the sequence chosen. Thus $\Phi$ is $\omega$-continuous. Then the Kleene fixed point theorem gives the results.
\end{proof}

\noindent\rule{\textwidth}{1pt}

\paragraph{Theorem~\ref{prop:tarski}} \hypertarget{pf:tarski}{}
\begin{proof}
	Because $\Phi$ is the composition of an increasing affine map, an element-wise maximum with 0, and an element-wise minimum with $c > 0$, $\Phi$ is non-negative and monotone increasing.
	
	We now show that we can restrict the domain of $\Phi$ to a complete lattice containing all fixed points. In the worst case, all finite caps are met, leaving us with the system $\Psi_0 \gamma X \Psi_0^T$. Since this has spectral radius $<1$, this subsystem has a unique fixed point $\ell_{max}$ by the result in the previous section. Thus, in the worst case, this is the maximum fixed point of $\Phi$. Note that this is dependent on the shock $s$, but such a point exists for each $s$. Let $y$ be the maximum element of $p$ and form the complete lattice $[0,y] \subset \mathbb{R}^m$ bounded in each coordinate by $0$ and $y$.
	
	Restrict the domain of $\Phi$ to $[0,y]$. Then the Tarski fixed point theorem gives us the existence of least and greatest fixed points.
\end{proof}

\noindent\rule{\textwidth}{1pt}

\paragraph{Lemma~\ref{lemma:net_liabilities_1}} \hypertarget{pf:net_liabilities_1}{}
\begin{proof}
Let $L_{i*} := (Le)_i$ and $L_{*i} := (L^T e)_i$. Then $\Delta_i(L) = L_{*i} - L_{i*}$. Note that $L_{*i} = f^i(L_{i*})$, where
$$f^i(L_{i*}) := \sum_j \Big( \Gamma_{ji}\Big((L_{i*} + sh_i - DD_{ji})\vee 0\Big) \wedge CP_{ji}\Big).$$
This is because the amount that reinsurers reimburse $i$ is dependent on the liabilities that $i$ directly faces--i.e., $L_{i*}$.

Then $\Delta_i(L) = f^i(L_{i*}) - L_{i*}$ is monotone decreasing (i.e., nonincreasing) in $L_{i*}$ since reinsurance is limited to 100\%. When a contract deductible is reached, the negative slope lessens. When a contract cap is reached, the negative slope steepens. However, the 100\% reinsurance limit means that the slope is never greater than zero.

Since $L \geq L'$, we also have $L_{i*} \geq L'_{i*}$. The result then follows from the fact that $\Delta_i(L) = \Delta_i(L_{i*})$ is monotone decreasing in $L_{i*}$.
\end{proof}

\noindent\rule{\textwidth}{1pt}

\paragraph{Lemma~\ref{lemma:net_liabilities_2}} \hypertarget{pf:net_liabilities_2}{}
\begin{proof}
	$$\begin{aligned}
	\sum_i \Delta_i(L) &= \sum_i \Big( \sum_j L_{ji} - \sum_j L_{ij}\Big) \\
	&= \sum_{i,j} L_{ij} - \sum_{i,j} L_{ji} \\
	&= 0
	\end{aligned}$$
\end{proof}

\noindent\rule{\textwidth}{1pt}

\paragraph{Theorem~\ref{prop:net_liabilities_equal}} \hypertarget{pf:net_liabilities_equal}{}
\begin{proof}
	Without loss of generality, assume $L' = L^-$, the least fixed point. Since $L\geq L'$,  Lemma~\ref{lemma:net_liabilities_1} implies that $\Delta(L) \leq \Delta(L')$.
	
	Now suppose there exists $i$ such that $\Delta_i(L)<\Delta_i(L')$. Then we in turn have
	$$\sum_j \Delta_j(L) < \sum_j \Delta_j(L').$$
	However, by Lemma~\ref{lemma:net_liabilities_2}, we know that
	$$\sum_j \Delta_j(L) = 0 = \sum_j \Delta_j(L').$$
	Thus there can be no such $i$.
\end{proof}

\noindent\rule{\textwidth}{1pt}

\paragraph{Proposition~\ref{prop:no_caps_iter_converge}} \hypertarget{pf:no_caps_iter_converge}{}
\begin{proof}
We first show that, at each step, the system $(I-\gamma BX)$ is nonsingular. We are given $\rho(\gamma X)<1$. Then, as noted in the proof to Theorem~\ref{prop:unique}, the spectral radii obey
$$\rho(\gamma B X) \leq \rho(\gamma X) < 1$$
for any diagonal $B$ with 1-0 entries since $B$ only serves to remove edges from the initial line graph. Then the Neumann series gives us that $(I-\gamma BX)$ is invertible at each step in the algorithm.

The algorithm converges to the correct solution by a simple monotonicity argument. At each step, we have $B_t \leq \hat B$, where $\hat B$ is the true set of edge activations, since we start with all edges unactivated and edges that become activated are direct propagations of the claims on primary insurers. The sequence of $B_t$ is monotonically increasing in entries since the activation of edges can only increase the number of other edges that become activated. $B_t$ can update at most $m$ times as that is how many edges can become activated. Eventually, we reach a state that represents the correct edge activations, after which the contagion spreads to no further edges, and the edge liabilities are the solution to the resulting linear system. This equilibrium point is the unique fixed point since solving the linear system and checking that $B$ does not change is equivalent to verifying a fixed point of $\Phi$. As each step requires solving a linear system (requiring in general $O(m^3)$ time), and there are at most $m$ steps, the total running time is at most $O(m^4)$.
\end{proof}

\noindent\rule{\textwidth}{1pt}

\paragraph{Proposition~\ref{prop:caps_iter_converge}} \hypertarget{pf:caps_iter_converge}{}
\begin{proof}
	As before, we first show that, at each step, the system $(I-\tilde \gamma \tilde B \tilde X)$ is nonsingular. We are given $\rho(\Omega)<1$. Then, as noted in the proof to Theorem~\ref{prop:unique}, the spectral radii obey
	$$\rho(\tilde \gamma \tilde B \tilde X) \leq \rho(\Omega) < 1,$$
	for any $(B,C) \in \mathcal{K}$ and $\Psi(C)$ since $\tilde \gamma \tilde B \tilde X$ is effectively a subgraph of $\Omega$ after removing edges under $B$ and nodes under $\Psi$.
	
	
	We are given $\rho(\Psi \gamma X \Psi)<1$. This applies to the first iteration of the algorithm. All subsequent iterations involve $(C,B)\in \mathcal{K}$. In particular, the last iterative $\ell$ value at that point in the algorithm is feasible for the given $(C,B)$. Thus at each iteration, we have $\rho(\tilde \gamma \tilde B \tilde X)<1$. Then the Neumann series gives us that $(I-\tilde \gamma \tilde B \tilde X)$ is invertible at each step in the algorithm.
	
	The algorithm converges to the correct solution by a monotonicity argument as in Proposition~\ref{prop:no_caps_iter_converge}. However, the setup here is more nuanced. If the iteration had started at 0 in this setting, we would lose the property that $B_t \leq \hat B$ and $C_t \leq \hat C$, where $\hat B$ and $\hat C$ are the true edge activations and cap activations, as some edge activations could cause the linear solver to attribute more liability to some edges than are allowed by their capacities. While the overcapacity would be corrected in the following iteration, the overcapacity leakage could have caused new activations in $B_t$ that cannot be corrected by the next iteration.
	
	Instead of starting at 0, we start at an upper bound to the solution. Such an upper bound is constructed by assuming all edges are activated ($B=I$) and all finite caps are activated and solving the linear system, which has a unique solution since as shown above. Now we will have $B_t \geq \hat B$ and $C_t \geq \hat C$ at each step since we start with an element-wise overestimate in $B$ and $C$ and any caps or edges that become deactivated through this process will have been unsupported given the overestimate. In this event, either we correct an element in $C$ downward or correct the same elements in both $B$ and $C$ downward. Thus the sequences of $B_t$ and $C_t$ are also monotonically decreasing. Note that we would never want to revise these corrections back upward in a later iteration as these edges or caps will never be activated by liabilities that are lower element-wise than we have already tried in the previous round.
	
	In the equilibrium, all edge and cap activations will be supported by the equilibrium liability values. Eventually, we reach a state that represents the correct edge and cap activations, and the edge liabilities are the solution to the resulting linear system. This equilibrium point is the unique fixed point since solving the linear system and checking that $B$ and $C$ do not change is equivalent to verifying a fixed point of $\Phi$.
	
	At each step, either $B$ or $C$ changes or we stop our iteration. Thus there are at most $2m$ steps as there are at most $2m$ possible changes to $B$ and $C$. The most complex task at each step is again solving a linear system, which requires in general $O(m^3)$ time. Note that the $\Psi$ transformations are sparse (at most a single entry per row and column) and can be computed in at most $O(m^2)$. Thus the algorithm converges in at most $O(m^4)$ time.
\end{proof}

\section{Simulation Details}\label{appendix:simulations}

\subsection{Network Construction}

As the basis for our simulations, we use real network data on property and casualty reinsurance from 2012 Schedule F Part 3, as obtained from the National Association of Insurance Commissioners \citep{naic_reins_data}. This data details premiums ceded to reinsurers by US insurance companies. Naturally, this data does not provide all contract parameters, so we estimate these using common rules of thumb in the insurance industry, which we back up with data where available.

\subsubsection{XL contract parameters}

We construct networks of XL contracts consistent with the NAIC data by estimating the coverage provided by each firm's reinsurance contracts and separating its reinsurers into two layers. We introduce the following `in-the-ballpark' example of a reinsurance contract.\footnote{Private conversations with an insurance industry executive. All errors are our own.}

\begin{example}
	(`Ballpark' Reinsurance Contract) Suppose \$500M is the 1-in-100 year loss for a firm. As an `in-the-ballpark' figure, this firm would purchase reinsurance coverage of \$500M in losses with a deductible of \$100M. The \$400M total coverage limit would be separated equally into 2-3 layers. The total premiums ceded for this coverage would be 10\% of the \$400M limit. The lower layers would receive closer to 20\% of their respective limits, while the higher layer would receive closer to 2-3\% of its respective limit.
\end{example}

This example suggests the following rules of thumb that we use to fill in parameters in our real world network:
\begin{itemize}
	\item $\text{premiums ceded} \approx 0.1 \cdot \text{coverage limit}$,
	
	\item $\text{coverage limit} \approx 4 \cdot \text{deductible}$,
	
	\item $\text{coverage} \approx 5 \cdot \text{deductible}$, where $\text{coverage} = \text{coverage limit} + \text{deductible}$,
	
	\item $\text{top layer premiums} \approx 0.2 \cdot \text{total premiums ceded}$.
\end{itemize}
The only publicly available reinsurance contract data that we are aware of comes from major state catastrophe funds--for instance, the Florida Hurricane Catastrophe Fund and the Texas Windstorm Insurance Association. We compiled data on these reinsurance contracts, which is available in our code repository. This data supports that the first rule of thumb is reasonable.

Given a separation of a firm's reinsurers into layers, these rules of thumb allow us to estimate each contract's deductible and cap. We then estimate each contract's proportion of the layer as
$$\gamma = \text{premiums ceded}/\text{total premiums ceded for layer}.$$ Note that the coverage limits discussed above, which represent the cap payout from the whole reinsurance tower, are different from individual contract caps, which dictate the maximum payout from each contract that is itself only a part of the whole tower.

To separate a firm's reinsurers into two layers, we use the last rule of thumb to note that the premiums from the bottom layer should add to 80\% of total premiums and the premiums from the top layer should comprise the remaining 20\%. This is a knapsack problem that we can efficiently solve approximately.

\subsubsection{Proportional contract parameters}

We construct networks of proportional contracts consistent with the NAIC data by setting all contract deductibles to 0, all contract caps to $\infty$, and calculating each contract's coinsurance rate as
$$\gamma = \text{premium ceded} / (\text{primary premiums} + \text{foreign reinsurance premiums} + \text{reinsurance premiums}),$$
where the denominator describes insurance premiums received by the ceding firm in the contract. In this way, the ceding firm cedes a proportion of their total risk for the same portion of the premiums they have received. We estimate the primary premiums and foreign reinsurance premiums next.

\subsubsection{Primary insurance and foreign reinsurance premiums}

We additionally need to estimate the insurance premiums received from outside the reinsurance network. If the receiving firm is a primary insurer, this is the primary insurance premiums they receive. If the firm is a reinsurer, this is foreign (outside US) reinsurance premiums.

We then generate figures for these values within an estimated range:
\begin{enumerate}
	\item We collect data on premiums received and reinsurance premiums ceded from 10-ks and annual reports. Our data is available in our code repository. From this data, we determine reasonable upper and lower bounds on $\text{premiums ceded}/\text{premiums received}$ for firms.
	
	\item For each firm in the network, we generate a random number uniformly between the upper and lower bounds. This is used as the firm's $\text{premiums ceded}/\text{premiums received}$ ratio.
	
	\item From this ratio, we calculate the outside premiums--either primary insurance or foreign reinsurance--that the firm must receive to achieve this ratio. If the amount is negative, it is treated as zero.
\end{enumerate}

Based on the data, primary insurers generally have ratios between 0.05 and 0.5, and reinsurers generally have ratios between 0.1 and 0.3. We use these bounds in our simulations.

\subsubsection{Firm capital levels}

We next need to estimate each firm's capital that is available for paying its liabilities (i.e., the firm's equity). Current capital regulations focus on various factors through Risk-Based Capital; however, past regulations focused on the simpler leverage ratio \citep{reins_leverage_ratio}. For simplicity, we use this latter measure as a benchmark in our simulations. The leverage ratio is defined in the following way:
$$\text{leverage ratio} = \text{equity} / \text{net written premiums}.$$
According to \citep{reins_leverage_ratio_data}, American regulation required minimum 50\% leverage ratios. They also state that 20\% leverage ratios was a ``rule of thumb" in the German market for property and casualty insurers.

We extend this information by collecting data on equities and net written premiums from 10-ks and annual reports. Our data is available in our code repository. We use this data to determine reasonable upper and lower bounds on current leverage ratios. Based on the data, insurers generally have leverage ratios between 0.7 and 2.0, which we use as bounds in our simulations. We then generate figures for firm leverage ratios within the estimated range:

\begin{enumerate}
	\item For each firm in the network, we generate a random number uniformly between the upper and lower bounds. This is used as the firm's leverage ratio.
	
	\item From the ratio, we calculate the firm's equity.
\end{enumerate}

Note that following a market collapse, leverage ratios can plummet, which can significantly affect the capital levels in the reinsurance network. This is the reason that Risk-Based Capital is now used for regulation instead of the leverage ratio. For the price of adding greater complexity to our simulations, we could alternatively use Risk-Based Capital measures to estimate equity values instead.

\subsubsection{Shocks to primary insurers}

The final component of our simulation setup is to calibrate network shocks. These shocks are claims on primary insurers in the network. For our simulations, we consider 1-in-100 year and 1-in-250 year shocks. Industry data on the estimated aggregate size of these shocks is available from \citep{reins_tail_losses_data}. In particular, the North American 1-in-100 year insured loss is estimated at \$215.2B, and the North American 1-in-250 year insured loss is estimated at \$290.6B. We use these numbers for the aggregate size of tail shocks in our simulations.

The remaining task is to distribute this aggregate shock to primary insurers in the network. We do this in the following way:
\begin{enumerate}
	\item For each firm, we generate a random number uniformly between 0 and the size of that firm, defined by the total primary premiums received. Under this scheme, the size of a primary insurer correlates with their exposure. Reinsurers' initial exposure is zero as they do not offer primary insurance coverage.
	
	\item We then generate the shock exposure ratios by normalizing these numbers so that they add to 1. Multiplying by the size of the aggregate shock then gives the size of claims to primary insurers under the shock.
\end{enumerate}

In reality, the relation between the size of a primary insurer and its exposure to aggregate shocks is more complex that we model here. On one hand, larger primary insurers may be in a better position to diversify their holdings against geographic risk. On the other hand, their exposures could be higher since their portfolios are larger. In a more realistic model, we would want to account for the geographic exposures of each primary insurer and simulate geographic tail events. However, the data needed for this is not, in general, publicly available.

\subsection{Sensitivity to parameter perturbations}

In these simulations, we construct an XL network from our data as described in the previous subsections. This is our base case for comparison. In this process, we store the layering structure for future access. Given a multiplicative error term $\delta$ (i.e., a percentage error), we then perturb the network parameters as follows:
\begin{itemize}
	\item For each premium ceded value, we generate a random number uniformly in $[1-\delta,1+\delta]$ and multiply it with the premium ceded value.
	
	\item We then construct the contract parameters using the perturbed premium ceded values as described in the previous subsections using the stored layering structure.
	
	\item For each value of a firm's primary insurance premiums received, foreign reinsurance premiums received, and capital levels, we perturb it by a random multiplicative value uniformly chosen in $[1-\delta, 1+\delta]$.
\end{itemize}

\subsection{Systemic effects of contract structures}

We construct systems that are comparable given the structure of the graph on premiums ceded between insurers and our rules of thumb for XL reinsurance contracts. We construct comparable systems using the methods from the previous subsections keeping premiums ceded, firm capital levels, primary insurance premiums, and foreign reinsurance premiums constant. The only differences are in how the ceded premiums are interpreted: as part of a proportional scheme or XL contracts based on our rules of thumb and knapsack separation of layers.

\end{document}